\setlist{noitemsep}
\DeclareMathOperator{\scap}{Cap}
\DeclareMathOperator{\ind}{Ind}
\DeclareMathOperator{\fcc}{CP}
\DeclareMathOperator{\vc}{VC}
\DeclareMathOperator{\mm}{MM}
\DeclareMathOperator{\fm}{FM}
\DeclareMathOperator{\is}{\alpha}
\DeclareMathOperator{\cl}{cl}
\DeclareMathOperator{\bo}{bo}
\newtheorem{theorem}{Theorem}%[section]
\newtheorem{proposition}[theorem]{Proposition}%[section]
\newtheorem{corollary}[theorem]{Corollary}%[section]
\newtheorem{lemma}[theorem]{Lemma}%[section]
\newcommand{\calC}{{\mathcal C}}
\newcommand{\eat}[1]{}
\newcommand\ff{{\mathbb F}}
\newcommand\integers{{\mathbb Z}}
\date{}
\begin{document}
\title{
%Storage Capacity as an Information-Theoretic \\Analogue of Vertex Cover
%Improved Bounds for Storage Capacity and Index Coding via Connections to Vertex Cover
Storage Capacity as an Information-Theoretic Vertex Cover \\and the Index Coding Rate
%\thanks{University of Massachusetts, Amherst. 
%\texttt{\{arya,mcgregor,svorotni\}@cs.umass.edu}. This work was supported by NSF  Awards CCF-0953754, IIS-1251110,  CCF-1320719, CCF-1642658, CCF-BSF-1618512, and a Google Research Award. Some of the results of this paper have appeared in the proceedings of the IEEE International Symposium on Information Theory, 2017.
% contains an error. The approximation factor for index coding rate for planar graphs was wrongly claimed to be 1.923. The correct approximation factor of our method is 2, and we have corrected Theorem~\ref{thm:ind} in this version.
%}
}
\author{
Arya Mazumdar 
\and 
Andrew McGregor
 \and 
Sofya Vorotnikova }

%\input{statementITCS}

%\clearpage
%\setcounter{page}{1}

\maketitle
{\renewcommand{\thefootnote}{}\footnotetext{

%\vspace{-.2in}
 
%\noindent\rule{1.5in}{.4pt}

College of Information and Computer Sciences, University of Massachusetts, Amherst. 
\texttt{\{arya,mcgregor,svorotni\}@cs.umass.edu}. This work was supported by NSF  Awards CCF-0953754, IIS-1251110,  CCF-1320719, CCF-1642658, CCF-BSF-1618512, and a Google Research Award. Some of the results of this paper have appeared in the proceedings of the IEEE International Symposium on Information Theory, 2017.
}
\renewcommand{\thefootnote}{\arabic{footnote}}

\begin{abstract}
Motivated by applications in distributed storage, the storage capacity of a graph was recently defined to be the maximum amount of information that can be stored across the vertices of a graph such that the information at any vertex can be recovered from the information stored at the neighboring vertices. Computing the storage capacity is a fundamental problem in network coding and is related, or equivalent, to some well-studied problems such as index coding with side information and generalized guessing games. In this paper, we consider storage capacity as a natural information-theoretic analogue of the minimum vertex cover of a graph. Indeed, while it was known that storage capacity is upper bounded by minimum vertex cover, we show that by treating it as such we can get a $3/2$ approximation for planar graphs, and a $4/3$ approximation for triangle-free planar graphs. Since the storage capacity is intimately related to the index coding rate, we get a $2$ approximation of index coding rate for planar graphs and $3/2$ approximation for triangle-free planar graphs. Previously, only a trivial $4$ approximation of the index coding rate was known for planar graphs. We also show a polynomial time approximation scheme for the index coding rate when the alphabet size is constant. We then develop a general method of ``gadget covering'' to upper bound the storage capacity in terms of the average of a set of vertex covers. This method is intuitive and leads to the exact characterization of storage capacity for various families of graphs. As an illustrative example, we use this approach to derive the \emph{exact} storage capacity of cycles-with-chords, a family of graphs related to outerplanar graphs. Finally, we generalize the storage capacity notion to include recovery from partial node failures in distributed storage. We show  tight upper and lower bounds on this partial recovery capacity that scales nicely with the fraction of failures in a vertex.

{\it Keywords:} Distributed storage, storage capacity, index coding, vertex cover, graph theory, approximation algorithms, planar graphs.
\end{abstract}

%\footnote{A shorter version of this paper in the proceedings of the IEEE International Symposium on Information Theory, 2017 contains an error. The approximation factor for index coding rate for planar graphs was wrongly claimed to be 1.923. The correct approximation factor of our method is 2, and we have corrected Theorem~\ref{thm:ind} in this version.}

\section{Introduction}
The Shannon capacity of a graph  \cite{shannon1956zero} is a well studied parameter that quantifies the zero-error
capacity of a noisy communication channel. % with noise model defined by the graph. 
There are also several other notions of graph capacities or graph entropies
that model different communication/compression scenarios (for example, see \cite{alon1996source}).
In this paper, we are interested in a  recent definition of graph capacity, called the {\em storage capacity}, that we consider to be a natural information-theoretic analogue of the minimum vertex cover of a graph.

%There are at least a few different ways Shannon-type capacities of a graph can be defined, the most prominent one due to Shannon \cite{shannon1956zero} (also see,
%).

Suppose, every vertex of a graph can store a symbol (from any alphabet) with the
criterion that the content of any vertex can be uniquely recovered  from the contents of its neighborhood in the graph.
Then the maximum amount of information
that can be stored in the graph is called the {storage capacity} of that graph \cite{mazumdar2015storage}.
This formulation is mainly motivated by applications in distributed storage, and generalizes the popular definition of {\em locally repairable codes} \cite{gopalan2012locality,papailiopoulos2014locally,cadambe2015bounds,ShanmugamD14}.
In a distributed storage system, each symbol (or coordinate) of a codeword vector is stored at a different server or storage node. In the case of a single server
failure, it is desirable to be able to recover the data of that server by accessing a small number of other servers. Given the topology of the storage network as a graph,
it is quite  natural to model the local repair problem as a {\em neighborhood repair} problem as above. 

Formally, suppose we are given an $n$-vertex {\em undirected} graph $G(V,E)$, where $V=[n] \equiv \{1,2,\ldots ,n\}$.
Also,
given a positive integer $q \ge 2$,
 let $H(X)$ be the Shannon entropy of the random variable $X$ in $q$-ary units (for example, when $q=2$, the entropy is in bits). Let   $\{X_i\}_{i\in V}$, be random variables each with a finite sample space $Q$ of size $q$. For any $I \subseteq [n]$, let $X_I \equiv \{X_i: i \in I\}$.  Consider the solution of the following optimization problem: 
\begin{align}
\max H(X_1, \dots , X_n)
\end{align}
such that $$
H(X_i|X_{N(i)})=0,
$$
for all $i\in V$ where $N(i)=\{j\in V: (i,j)\in E\}$ is the set of neighbors of vertex $i$. This is the storage capacity of the graph $G$ and we denote it  by
$\scap_q(G)$. Note that, although we hide the unit of entropy in the notation $H(\cdot)$, the unit should be clear
from context, and the storage capacity should depend on it, as reflected in the subscript in the notation $\scap_q(G)$.
The absolute storage capacity is defined to be,
\begin{align}
\scap(G) \equiv \sup_q \scap_q(G). %=  \lim_{q\to \infty} \scap_q(G).
\end{align}
Note that, $
H(X_i|X_{N(i)})=0
$  in the above definition implies that there exist $n$ deterministic functions $f_i: Q^{|N(i)|} \to Q$ such that $f_i(X_{N(i)}) = X_i$, $i = 1, 2, \dots, n.$ These functions are called the {\em recovery functions}. Given the recovery functions we can define a {\em storage code} $\{(x_1, x_2, \dots, x_n) \in Q^n : x_i = f_i(x_{N(i)}) \}.$ It follows that $\scap_q(G)$ is the logarithm of the maximum possible size of a storage code over all possible sets of recovery functions. 

In \cite{mazumdar2015storage}, it was  observed that the storage capacity is  
upper bounded by the size of the minimum vertex cover $\vc(G)$ of the graph $G$.
\begin{align}\label{eq:vc}
\scap(G) \le |\vc(G)|.
\end{align}
The proof of this fact is quite simple. Since all the neighbors of $V\setminus \vc$ belong to $\vc$, 
$$
H(X_V) = H(X_{\vc(G)}, X_{V \setminus \vc(G)}) = H(X_{\vc(G)}) + H( X_{V \setminus \vc(G)}| X_{\vc(G)}) = H(X_{\vc(G)}) \le |\vc(G)|.
$$ 
Indeed, this proof shows that $H(X_V) = H(X_{\vc(G)})$. 
%Define $n - \scap(G)$ to be the {\em storage redundancy} of the graph. If there were a way to separate the content that contribute to storage capacity and the content to storage redundancy\footnote{This is possible whenever the set of $n$-vectors that we store in the graph form a linear code.}, and we could store the capacity contribution to a set of vertices $U \subset V$, then $U$ would have formed a vertex cover of the graph.    \comment{AM: need to revisit the previous sentence} 
Because of this, we think it is natural to view storage capacity as an  information theoretic analogue of vertex cover. 
It was also shown in \cite{mazumdar2015storage} that the storage capacity is at least equal to the
 size $\mm(G)$ of the maximum matching   of the graph $G$:
\begin{align}
\mm(G) \le \scap_2(G) \le \scap(G). 
\end{align}
 Since maximum matching and minimum vertex cover are  two quantities
 within a factor of two of each other and maximum matching can be found in polynomial time, this fact gives a $2$-approximation
of the storage capacity\footnote{Indeed, finding a {\em maximal matching} is sufficient for this purpose.}. Provable strict improvement of the maximum matching scheme %this approximation
 is unlikely to be achieved by simple means, since that would imply a better-than-2 approximation ratio for the minimum vertex cover problem
violating the unique games conjecture \cite{khot2008vertex}. 

%On the other hand, it is possible to approximate vertex cover  
This motivates us to look for natural families of graphs where minimum vertex cover has a better approximation. For example, for bipartite graphs
maximum matching is equal to minimum vertex cover and hence storage capacity is exactly equal to the minimum vertex cover. Another obvious class, and our focus in Section~\ref{sec:planar}, is the family of planar graphs for which a PTAS (polynomial-time-approximation-scheme) is known  \cite{Bar-YehudaE82,Baker94}. 
%However this is not the only motivation to study storage capacity of planar families. 
Another motivation for studying storage capacity on planar graphs is that they represent common network topologies
for distributed systems. For example, see \cite{bowden2011planarity} to note how a surprising number of data networks are actually planar. 
To minimize interference, it is natural for a distributed storage system to be arranged as a planar network. Moreover it is useful to have wireless
networks, video-on-demand networks etc.~that are planar or almost planar. 

Video-on-demand also motivates a related broadcast problem 
called {\em index coding}  \cite{bar2011index} for which planar topologies are of interest, and outerplanar topologies have already
been studied \cite{berliner2011index}.
It was shown in \cite{mazumdar2015storage} that  storage capacity is, in a coding-theoretic sense, dual to 
 index coding and  is  equivalent to the {\em guessing game} problem of \cite{gadouleau2015fixed}.
 Let   $\{X_i\}_{i\in V}$ be independent uniform random variables
%  \comment{added the work independent but maybe it is already implied?} 
 each with a finite sample space of size $q$.
The index coding rate for a graph $G(V,E)$ is defined to be the optimum value  of the following minimization problem: 
\begin{align}
\min H(Y)
\end{align} where $Y$ is a random variable with finite support such that %$H(X_i)=1$ and  % \comment{$H(X_i)=1$ and (the previous constraint wasn't there previously so I wanted to check}
$$
H(X_i|Y,X_{N(i)})=0,
$$
 for all $i\in V$. This is called the optimum index coding rate for the graph $G$, and we denote it as $\ind_q(G)$. We can also define,
\begin{align}
 \ind(G) =\inf_{q} \ind_q(G).
\end{align}

The index coding problem is the hardest of all network coding problems and has been the subject of much recent attention, see, e.g., \cite{langberg2011hardness}. In particular it can be shown that  any network coding problem can be reduced to an index coding problem \cite{effros2015equivalence}. 
%However, the storage capacity has not been explored much and  in particular the storage capacities (as well as the index coding rates) of many simple families of graphs is unknown.
It has been shown  that (see  \cite{mazumdar2015storage}),
\begin{align}\label{eq:indsc}
n -\ind_q(G) \leq \scap_q(G) \leq  n -\ind_q(G) + \log_q (n\ln q) \ .
\end{align}
From this we claim,
\begin{align}
\scap(G) = n -\ind(G) \ .
\end{align}
To see this, define $\scap^{(q)}(G) \equiv \sup_{m\in \integers_+} \scap_{q^m}(G)$ and $\ind^{(q)}(G) \equiv \inf_{m\in \integers_+} \ind_{q^m}(G)$. It is evident that $(m+r)\scap_{q^{m+r}}(G) \ge m\scap_{q^m}(G)+ r\scap_{q^r}(G)$ and  $(m+r)\ind_{q^{m+r}}(G) \le m\ind_{q^m}(G)+ r\ind_{q^r}(G)$ for any two nonnegative integers $m$ and $r$.
Therefore, using Fekete's lemma, $\scap^{(q)}(G) =\lim_{m\to \infty} \scap_{q^m}(G)$ and $\ind^{(q)}(G) =\lim_{m\to \infty} \ind_{q^m}(G)$. Now this gives, from \eqref{eq:indsc},
\begin{align}
n -\ind^{(q)}(G) \leq \scap^{(q)}(G) \leq  n -\ind^{(q)}(G).
\end{align}
Taking supremum on both sides we have, $\scap(G) = n -\ind(G)$.
%Moreover notice that $\scap_q(G) \le \scap_{q^2}(G)$ and $\ind_q(G) \ge \ind_{q^2}(G)$ for all $q$,
%and therefore, %as $q\rightarrow \infty$,

Hence, exact computation of $\ind(G)$ and $\scap(G)$ is equivalent although the approximation hardness could obviously differ.  
Note that, 
$$
\ind(G) \ge \alpha(G),
$$
where $\alpha(G)$ is the independence number of $G$. Since, for planar graphs
$\alpha(G) \ge n/4$, taking $Y$ to be $X_{[n]}$ already gives a $4$-approximation 
for the index coding rate for planar graphs since $H(Y)\le n$ \cite{Arbabjolfaei016}. In this paper, we give a significantly better approximation algorithm for index coding rate of planar graphs.
Not only that, due to the relation between index coding rate and storage capacity, we can obtain an approximation factor significantly better than $2$
for storage capacity. 
Note that, for general graphs even to approximate  the optimal index coding rate
within a factor of $n^{1-\epsilon}$ seems to be a challenge \cite{BlasiakKL10}.
%This motivates us to seek approximation of storage capacity for natural families of graphs, such as planar graphs, instead of for arbitrary graphs. 

To go beyond the realm of planar graphs, and to obtain better approximation ratios, we then develop several upper bounding 
tools for storage capacities. In particular by using these tools, we are able to exactly characterize storage capacities of various families of graphs. Our approach revisits a linear program proposed by Blasiak, Kleinberg, and Lubetzky \cite{BlasiakKL11} that can be used to lower bound the optimum index coding rate or upper bound the storage capacity. We transform the problem of bounding this LP into the problem of constructing a family of vertex covers for the input graph. This in turn allows us to upper bound the storage capacity of any graph that admits a specific type of vertex partition. We then identify various graphs for which this upper bound is tight.

Since, the storage capacity, or the vertex cover, act as absolute upper bounds on the rate of information storage
in the graph, a natural question to ask is, if we store 
above the limit of minimum  vertex cover in the graph, will any of the repair property be left? This is similar in philosophy to the
rate-distortion theory of data compression, where one compresses beyond entropy limit and still can recover the data 
with some distortion. This question gives rise to the notion of recovery from partial failure, as defined below. 

We  define the {\em partial repair capacity} also keeping the application of distributed storage in mind. This is a direct generalization
in the context of distributed storage application to handle
partial failure of vertices. In particular, suppose we lose $0 \leq \delta \leq 1$ proportion of the bits stored in a vertex. We still want to
recover these bits by accessing the remaining $(1-\delta)$-fraction of the bits in the vertex plus the contents of the neighborhood.  
What is the maximum amount of information that can be stored in the network with such restriction? Intuitively,
the storage capacity should increase. We characterize  the trade-off between $\delta$ and this increase in storage capacity
from both sides (i.e., upper and lower bounds on the capacity). A surprising fact that we observe is that, if 
we want to recover from
more than half of the bits being lost, then there is no increase in storage capacity.

In summary, we made progress on the study of storage capacity on three fronts:
\begin{itemize}
\item {\em Planar graphs.} We prove a $3/2$ approximation of absolute storage capacity  and $2$ approximation 
for index coding rate  for planar graphs. We provide an approximation guarantee that depends on the number of triangles
in the graph and, in the special case of triangle-free graphs, we get a $4/3$ approximation for storage capacity, and $3/2$ approximation
for index coding rate.  In addition to this, for a constant-size $q$ alphabet, we give a polynomial-time approximation scheme (PTAS) or an $(1+\epsilon)$ approximation, $\epsilon >0,$
for $\ind_q(G)$ of planar graphs using the well-known planar separator theorem \cite{LiptonT80}. 
\item {\em Tools for finding storage capacity upper bounds.} We develop an approach for bounding storage capacity in terms of a small number of vertex covers. We first illustrate this approach by finding the exact storage capacity of some simple graphs. We then use the approach to show a bound on any graph that admits a specific type of vertex partition. With this we prove exact bounds on a family of Cartesian product graphs and a family closely related to outerplanar graphs.
\item {\em Partial failure recovery.} We show that if recovery from neighbors is possible for up to
$\delta$-proportion failure of the bits stored in a server, then the capacity is upper bounded by the optimum value of
a linear program; in particular this implies when $\delta \ge \frac12$, then the partial recovery capacity is same as the 
storage capacity. For an odd cycle, the upper bound on partial recovery capacity is given by $\frac{n}{2}(1+R_2(\delta))$, where
$R_2(\delta)$ is the maximum achievable rate of a binary error-correcting code with relative minimum  Hamming distance at least 
$\delta n$. On the other hand, we also obtain general lower bounds on the partial recovery capacity of a graph.
For an odd cycle, our results imply that a partial failure recovery capacity of $\frac{n}{2}(2- h_2(\delta))$ is polynomial time achievable, where $h_2(\delta)$ denotes the binary entropy function. Our bounds are likely to be tight, since it is a widely believed conjecture  that $R_2(\delta) = 1- h_2(\delta)$ (the Gilbert-Varshamov bound).
\end{itemize} 

\paragraph{Organization.} The remainder of the paper is organized as follows. In Sections \ref{sec:prelim} and \ref{sec:exact} we state and prove some preliminary algorithmic results regarding the storage capacity and index coding that will be useful in proving subsequent results. In section \ref{sec:planar}, we prove our approximation results for planar graphs that include a $3/2$ approximation for $\scap(G)$, $2$ approximation for $\ind(G)$ and a PTAS for $\ind_q(G)$. In Section \ref{sec:upper}, we show a vertex partition approach to upper bound the storage capacity. The results regarding recovery from partial node failure are described in Section \ref{sec:partial}.

%\subsection{Information Theoretic Analogue of Vertex Cover} 

%\textcolor{red}{INNOVATIONS OF THIS PAPER IDEAS ETC.}

\section{Preliminaries}

%We denote the storage capacity of a graph $G$ by $\scap_q(G)$ and the optimum index coding rate by $\scap_q(G)$. When the alphabet size is clear by context we omit $q$.
\label{sec:prelim}
Let $\fcc(G)$ denote the fractional clique packing of a graph $G(V,E)$ defined as follows: Let $\calC$ be the set of all cliques in $G$. For every $C\in \calC$ define a variable $0\leq x_C \leq 1$. Then  
$\fcc(G)$ is the maximum value of 
\begin{align*}
\sum_{C\in \calC} x_C(|C|-1) \tag{CP($G$)}
\end{align*}
 subject to the constraint  that \[\sum_{C\in \calC: u\in C} x_C\leq 1\qquad \forall u\in V.  \]
Note that $\fcc(G)$ can be computed in polynomial time in graphs, such as planar graphs, where all cliques have constant size. Furthermore, $\fcc(G)$ is at least the size of the maximum fractional matching and they are obviously equal in triangle-free graphs since the only cliques are edges.

The following preliminary lemma shows that $\scap_q(G)\geq \fcc(G)$ for sufficiently large $q$. An equivalent result is known in the context of index coding but we include a proof here for completeness.
The basic idea is that we can store $k-1$ units of information on a clique of size $k$ by assigning $k-1$ independent uniform random variables to $k-1$ of the vertices and setting the final random variable to the sum (modulo $q$) of the first $k-1$ random variables. 

\begin{lemma}\label{lem:scaplb}
$\scap_q(G)\geq \fcc(G)$ for sufficiently large $q$.
\end{lemma}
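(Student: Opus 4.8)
The plan is to turn an optimal solution of CP($G$) into an explicit storage configuration over a large alphabet, essentially by putting $a_C$ parallel copies of the ``single clique'' scheme on each clique $C$, where $a_C$ are integral rescalings of the LP weights. First, observe that CP($G$) is a linear program with finitely many variables, rational data, and a bounded feasible region, so it attains its maximum at a rational vertex $x^*$ of its polytope. Let $N$ be a common denominator of the $x^*_C$, so that $a_C := N x^*_C \in \integers_{\geq 0}$, $\sum_{C \in \calC : u \in C} a_C \leq N$ for every $u \in V$, and $\sum_{C \in \calC} a_C(|C|-1) = N\cdot \fcc(G)$. (If $\calC$ is understood to contain singletons, they carry $|C|-1 = 0$ and can be ignored.)

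Next I would build the code over the alphabet $Q = (\integers_q)^N$, which has size $q^N$; think of vertex $v$'s stored symbol as an $N$-tuple of $\integers_q$-coordinates. For each vertex $v$, since $\sum_{C \ni v} a_C \leq N$, I partition a subset of its $N$ coordinate slots into pairwise disjoint blocks $B_v^C$ of size $a_C$, one block for each clique $C$ containing $v$, and I fix all slots of $v$ not lying in any such block to the constant $0$. Now fix a total order on $V$; for a clique $C = \{v_1, \dots, v_k\}$ with $v_1 < \dots < v_k$ and each $i \in \{1,\dots,a_C\}$, let the $i$-th coordinate of $B_{v_j}^C$ be a fresh independent uniform $\integers_q$ variable $U_{C,i,j}$ for $j = 1,\dots,k-1$, and set the $i$-th coordinate of $B_{v_k}^C$ to be $-\sum_{j=1}^{k-1} U_{C,i,j}$ in $\integers_q$. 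The $U_{C,i,j}$ over all choices are the free variables; there are $\sum_{C} a_C(|C|-1) = N\cdot\fcc(G)$ of them, so the code has exactly $q^{N\cdot\fcc(G)}$ codewords and, under the uniform distribution on it, $H(X_V) = \fcc(G)$ in $q^N$-ary units.

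It then remains to verify recoverability. Every coordinate of $X_v$ is a deterministic function of $X_{N(v)}$: a slot of $v$ outside all blocks is the constant $0$; a slot lying in a block $B_v^C$ is coordinate $i$ of that block for some $i$, and by construction the values $\{\,i\text{-th coordinate of }B_w^C : w \in C\,\}$ sum to $0$ in $\integers_q$, so $v$ recovers its own coordinate as the negative of the sum of the others — and every other vertex of $C$ is a neighbor of $v$, so $v$ can read those coordinates from $X_{N(v)}$. Hence $H(X_v \mid X_{N(v)}) = 0$ for every $v$, giving a valid storage configuration and $\scap_{q^N}(G) \geq \fcc(G)$. Since $q \geq 2$ is arbitrary, $q^N$ ranges over all $N$-th powers of integers $\geq 2$, which in particular can be made arbitrarily large, establishing the lemma (and, taking the supremum, $\scap(G) \geq \fcc(G)$).

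I do not expect a genuinely hard step here; the whole argument is a bookkeeping exercise matching a combinatorial packing to a coding construction. The two points that need a little care are: (i) justifying rationality of the LP optimum so that passing to the integral weights $a_C$ is legitimate, and (ii) the observation that the capacity constraint $\sum_{C \ni v} a_C \leq N$ is precisely what makes the blocks $B_v^C$ fit disjointly into the $N$ coordinate slots of $v$ — this is the only place the CP($G$) constraint is used, and it is what forces the alphabet to grow (hence the ``sufficiently large $q$'').
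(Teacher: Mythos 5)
Your proposal is correct and follows essentially the same route as the paper's own proof: both scale the LP weights so the per-clique allocations become integral, give each clique a parity-check (sum-to-zero) code on that allocation, and observe that the packing constraint $\sum_{C\ni u} x_C \leq 1$ is exactly what lets all of $u$'s clique-allocations fit into a single symbol of the enlarged alphabet. The only cosmetic difference is that the paper realizes the enlarged alphabet as $\{0,\dots,q^{x_C}-1\}$ with addition mod $q^{x_C}$ inside each clique, whereas you use $a_C$ disjoint $\integers_q$-coordinates with coordinatewise parity; these are interchangeable bookkeeping choices.
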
 
\begin{proof}%[Proof of Lemma~\ref{lem:scaplb}]
Let $\{x_C\}_{C\in \calC}$ achieve $\fcc(G)$. Let $q$ be sufficiently large power of $2$ such that ${x_C}\cdot\log_2q  $ is integral for every $C$. For each clique $C=\{u_1,\ldots, u_{|C|}\}$ in the graph, define a family of random variables $X_{u_1}^C, X_{u_2}^C, \ldots, X_{u_{|C|}}^C$ where $X_{u_1}^C, X_{u_2}^C, \ldots, X_{u_{|C|-1}}^C$ are independent and uniform over $\{0, 1, \ldots, q^{x_C}-1\}$ and 
\[
X_{u_{|C|}}^C=\sum_{i=1}^{|C|-1} X_{u_i}^C \bmod q^{x_C} \ .
\]
Note that each $X_{u_i}^C$ can be deduced from $\{ X_{u_j}^C \}_{j\neq i}$ and the entropy of $\{ X_{u_j}^C \}_{j\neq i}$ is $x_C(|C|-1)$. Finally, let $X_u$ be an encoding of $\{X_u^C\}_{C\in \calC: u\in C}$ as a symbol from a $q$-ary alphabet; the fact that this is possible follows because $\sum_{C\in \calC: u\in C} x_C\leq 1$. Then the entropy of $\{X_u\}_{u\in V}$ is $\fcc(G)$ as required.
\end{proof}

\paragraph{Further notations.}

Let $G[S]$ denote the subgraph induced by $S \subseteq V$. Let $\is(G)$ be the size of the largest independent set and let $\vc(G)$ be the size of minimum vertex cover. Let $\mm(G)$ be the size of the largest matching of the graph and let $\fm(G)$ be the weight of the maximum fractional matching, i.e., $\fm(G)$ is the maximum of $\sum_{e\in E} x_e$ such that $\sum_{e: v \in E} x_e \le 1$ for all $v \in V$ and $0 \le x_e \le 1, \forall e \in E.$

\section{Exact algorithms for storage capacity and index coding rate}
\label{sec:exact}
\label{sec:exact}
In this section, we describe some (super)exponential time algorithms that compute $\scap_q(G)$ and $\ind_q(G)$ exactly for a constant $q$. First, we review the approach involving a \textit{confusion graph} and then present another algorithm for storage capacity that has faster running time on some graphs.

Let the vertices of the graph $G$ be represented by $1, 2 \dots, n$. A confusion graph $\mathcal{G}(G)$ of $G$ has $q^n$ vertices. Each of the different $q$-ary strings of length $n$ represent the vertices. There is an edge between two strings $x$ and $y$ if there exists some $i \in \{1,\dots, n\}$, such that $x_i \neq y_i$, but $x_j = y_j$ for all $j \in N(i),$ neighbors of $i$ in $G$. It is known that  the  index coding rate $\ind_q(G)$ is equal to logarithm of the chromatic number of $\mathcal{G}(G)$ and the storage capacity $\scap_q(G)$ is the logarithm of the size of maximum independent set of $\mathcal{G}(G)$ \cite{alon2008,mazumdar2015storage}.
%Alon et al.~\cite{alon2008} prove that the optimal index coding rate is equal to logarithm of the chromatic number of $\mathcal{C}(G)$ and Mazumdar~\cite{mazumdar2015storage} shows that the optimal storage capacity is the logarithm of the size of maximum independent set.
%Note that strings corresponding to an independent set on such graph form a storage code and thus, computing a maximum independent set provides an optimal code for $q$-ary alphabet. 
Thus, by employing the best currently known algorithms for computing chromatic number~\cite{bhk2009} and maximum independent set~\cite{xiao2013}, we can compute $\ind_q(G)$ and $\scap_q(G)$ in time $O(2.2461^{q^n})$ and $O(1.1996^{q^n})$ respectively. 

%Since the confusion graph is of size $q^n$, this algorithm takes time $O(2^{q^n})$. \comment{AM: Shouldn't there be an extra factor for checking whether each subset of strings is an independent set? We should be citing something to get $1.3^{q^n}$.}

%For storage capacity, the confusion graph approach effectively iterates through all sets of strings of length $n$, checks whether a set corresponds to a storage code, and picks a code of maximum size. 
To compute storage capacity $\scap_q(G)$, we can take a different approach. We can go over all the possibilities for the $n$ recovery functions $f_1, \dots, f_n$ corresponding to vertices $1, 2, \dots, n$, and for each of the possibilities construct the corresponding  storage codes. To be precise, let vertex $i\in [n]$ have degree $d(i)$. There are $q^{q^{d(i)}}$ possibilities for the recovery function $f_i$. The total number of possibilities for $n$ recovery functions is then \[\prod_{i=1}^n q^{q^{d(i)}} \leq q^{n q^{\Delta}} \ , \]
where $\Delta$ is the maximum degree in $G$.  For each possibility, we go over the $q^n$ vectors and remove the ones that do not agree with the recovery functions, which takes time $O(m)$, where $m=\frac{1}{2} \sum_{v \in V(G)} d(v)$ is the number of edges in $G$, per vector. The remainder of the vectors form the storage code that agrees with the recovery functions. We find the combination of recovery functions that give the largest storage code. The total running time of this algorithm is at most $O(m q^n q^{n q^{\Delta}})$. Since without loss of generality we can assume that the largest storage code contains the all-zero vector, the running time can slightly be improved to $O(m q^{n q^{\Delta}})$. If $\Delta$ is small then this approach is faster than building a confusion graph and computing maximum independent set on it. 

%pick the one that results in the largest code. Let us be more precise. %By translating a storage code, we can always ensure that the all-zero string is included in it. This allows us to reduce the number of possible recovery functions for every vertex. 
 %Therefore, the total running time of the algorithm is $O(m q^{(n q^{\Delta})})$. 
To summarize, we have the following result that will be used to design a PTAS for $\ind_q(G)$ for planar graphs.
\begin{theorem}
Computing $\scap_q(G)$ exactly takes time \[O\left( \min \left\{ 1.1996^{q^n},  m q^{n q^{\Delta}} \right\}\right)\] for a graph with $n$ vertices, $m$ edges, and maximum degree $\Delta$. Computing $\ind_q(G)$ exactly takes time $O(2.2461^{q^n})$.
\end{theorem}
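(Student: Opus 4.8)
The plan is to reduce both quantities to well-studied parameters of the \emph{confusion graph} $\mathcal{G}(G)$ and then invoke the fastest known exact algorithms for those parameters. First I would build $\mathcal{G}(G)$ explicitly: it has $N := q^n$ vertices, one per string in $Q^n$, and adjacency of two strings $x,y$ is tested by checking whether some coordinate $i$ has $x_i\neq y_i$ while $x_j=y_j$ for all $j\in N(i)$, which costs $O(n+m)$ per pair, so the whole graph is assembled in time $\poly(q^n)$ -- lower order than every bound claimed. I would then invoke the characterization recalled above (from \cite{alon2008,mazumdar2015storage}) that $\ind_q(G)=\log\chi(\mathcal{G}(G))$ and $\scap_q(G)=\log\alpha(\mathcal{G}(G))$. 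Plugging the Björklund--Husfeldt--Koivisto chromatic-number algorithm \cite{bhk2009}, which runs in $O(2.2461^{|V|})$ on a $|V|$-vertex graph, with $|V|=q^n$ gives the claimed $O(2.2461^{q^n})$ for $\ind_q$; likewise, plugging the Xiao--Nagamochi maximum-independent-set algorithm \cite{xiao2013}, running in $O(1.1996^{|V|})$, yields an $O(1.1996^{q^n})$ algorithm for $\scap_q(G)$.

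For the second term of the $\min$ in the $\scap_q$ bound I would instead enumerate recovery-function tuples directly, as sketched in the discussion preceding the statement. A vertex $i$ of degree $d(i)$ admits $q^{q^{d(i)}}$ candidate functions $f_i\colon Q^{d(i)}\to Q$, so the number of tuples $(f_1,\dots,f_n)$ is $\prod_{i=1}^n q^{q^{d(i)}} = q^{\sum_i q^{d(i)}} \le q^{n q^{\Delta}}$ since $d(i)\le\Delta$. For a fixed tuple, the induced storage code $\{x\in Q^n : x_i=f_i(x_{N(i)})\text{ for all }i\}$ is produced by scanning all $q^n$ vectors and checking agreement in time $O(m)$ per vector, with negligible bookkeeping to retain the largest code seen, for a total of $O(m q^n q^{n q^{\Delta}})$. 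To shave the extra $q^n$ factor I would use that storage codes are closed under translation (replacing $f_i$ by $z\mapsto f_i(z+c_{N(i)})-c_i$ shifts the code by $c$), so without loss of generality the optimal code contains $\veczero$; it then suffices to enumerate only the tuples with $f_i(\veczero)=0$ for all $i$, of which there are $\prod_i q^{q^{d(i)}-1}=q^{(\sum_i q^{d(i)})-n}$, and the $q^{-n}$ cancels the vector-scan factor, giving $O(m q^{n q^{\Delta}})$. Taking the better of the two storage-capacity algorithms for each input graph yields the stated bound.

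I do not expect a genuine obstacle: the content is the assembly of the confusion-graph equivalence with current best exact algorithms, plus the count of recovery-function tuples. The only points needing care are (i) confirming that the $\poly(q^n)$ overheads -- constructing $\mathcal{G}(G)$ and performing the per-vector consistency checks -- are truly subsumed by the exponential terms $1.1996^{q^n}$, $2.2461^{q^n}$, and $m q^{n q^{\Delta}}$, and (ii) justifying the ``all-zero word'' normalization that removes the $q^n$ factor in the second storage-capacity algorithm, as above.
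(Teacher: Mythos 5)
Your proposal matches the paper's proof essentially step for step: build the confusion graph, invoke the $\ind_q = \log\chi$ and $\scap_q = \log\alpha$ characterizations with the Björklund--Husfeldt--Koivisto and Xiao--Nagamochi algorithms, and alternatively enumerate recovery-function tuples for $\scap_q$. Your only deviation is that you spell out the all-zero-vector normalization (translation closure of storage codes, restricting to tuples with $f_i(\veczero)=0$ so that the $q^{-n}$ savings on tuple enumeration cancels the $q^n$ vector-scan cost), whereas the paper asserts this improvement without justification; your accounting is correct and fills in exactly the detail the paper leaves implicit.
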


\section{Approximation algorithms for planar graphs via vertex cover}
\label{sec:planar}
\label{sec:planar}
In this section, we present approximation results for the storage capacity and optimal index coding rate of planar graphs. Specifically we show that $\fcc(G)$ can be used to achieve a $3/2$ approximation of the storage capacity and a $2$ approximation of the optimal index coding rate. 

%As mentioned in the introduction, it is natural to consider planar graphs in this context for various practical reasons. From a technical perspective, to beat the factor 2-approximation that follows from the fact that $\scap(G)$ is sandwiched between the minimum fractional vertex cover (i.e., the maximum fractional matching) and the minimum vertex cover, it is useful to consider a family of graphs in which the integrality gap these two w

In our storage capacity result we use ideas introduced by Bar-Yehuda and Even\cite{Bar-YehudaE82} for the purpose of $5/3$-approximating the vertex cover in planar graphs. Specifically, they first considered a maximal set of vertex-disjoint triangles, reasoned about the vertex cover amongst these triangles, and then reasoned about the triangle-free induced subgraph on the  remaining vertices. We consider a similar decomposition and reason about the integrality gap of vertex cover in each component. We parameterize our result in terms of the number of triangles; this will be essential in the subsequent result on optimal index coding rate.
 
\begin{theorem}\label{thm:scap}
Assume $G$ is planar and let $T$ be a set of $3t$ vertices corresponding to maximal set of $t$ vertex disjoint triangles.  Then,
\[1 \leq \frac{\scap(G)}{\fcc(G)} \leq \frac{3t+k}{2t+3k/4}\]
where $k$ is the size of the minimum vertex cover of $G[V\setminus T]$. Hence $\fcc(G)$ is a $3/2$ approximation for $\scap(G)$ and $4/3$ approximation if $G$ is triangle-free.
\end{theorem}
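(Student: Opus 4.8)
The lower bound $1\le\scap(G)/\fcc(G)$ is immediate from Lemma~\ref{lem:scaplb}: taking $q$ large enough gives $\scap(G)\ge\scap_q(G)\ge\fcc(G)$; we may assume $\fcc(G)>0$, since otherwise $G$ has no edges and the statement is vacuous. So all of the work is in the upper bound, and the plan is to sandwich $\scap(G)$ between a vertex-cover bound (for the numerator) and a clique-packing bound (for the denominator), both read off the decomposition $V = T\,\cup\,(V\setminus T)$.

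For the numerator I would use \eqref{eq:vc}: the set $T$ together with a minimum vertex cover of the triangle-free graph $H:=G[V\setminus T]$ is a vertex cover of $G$ (every edge of $G$ with an endpoint in $T$ is hit by $T$, and every other edge lies inside $H$), hence $\scap(G)\le|\vc(G)|\le 3t+k$. For the denominator I would exhibit a feasible solution of $\mathrm{CP}(G)$: set $x_C=1$ on each of the $t$ vertex-disjoint triangles of $T$, contributing $2t$; and on $H$ use an optimal fractional matching of $H$ --- this is legitimate because $H$ is triangle-free by maximality of $T$ (so its only relevant cliques are edges) and because $H$ lives on $V\setminus T$, so the per-vertex constraints of $\mathrm{CP}(G)$ are respected. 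This gives $\fcc(G)\ge 2t+\fm(H)$.

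The only non-routine step is to show $\fm(H)\ge\tfrac34 k$, i.e.\ that the vertex-cover LP has integrality gap at most $4/3$ on triangle-free planar graphs. Here I would invoke the Nemhauser--Trotter structure theorem to get a partition $V(H)=V_0\cup V_{1/2}\cup V_1$ with $\fm(H)=|V_1|+|V_{1/2}|/2$ (the half-integral LP optimum, using that $\fm$ equals the fractional vertex cover by LP duality) and $\vc(H)=|V_1|+\vc(H[V_{1/2}])$. Since $H[V_{1/2}]$ is again triangle-free planar, Gr\"otzsch's theorem gives it a proper $3$-coloring, so $\is(H[V_{1/2}])\ge|V_{1/2}|/3$ and $\vc(H[V_{1/2}])\le \tfrac23|V_{1/2}|$. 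Using $\tfrac{a+c}{b+c}\le\tfrac ab$ for $a\ge b>0$ and $c\ge 0$,
\[
\frac{\vc(H)}{\fm(H)}=\frac{|V_1|+\vc(H[V_{1/2}])}{|V_1|+|V_{1/2}|/2}\le\frac{|V_1|+\tfrac23|V_{1/2}|}{|V_1|+\tfrac12|V_{1/2}|}\le\frac{2/3}{1/2}=\frac43 .
\]

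Combining the two bounds yields $\scap(G)\le 3t+k$ and $\fcc(G)\ge 2t+\tfrac34 k$, which is exactly $\scap(G)/\fcc(G)\le(3t+k)/(2t+3k/4)$. The two stated consequences are then elementary: $(3t+k)/(2t+3k/4)\le 3/2$ because $2(3t+k)\le 3(2t+3k/4)$ collapses to $8k\le 9k$; and when $G$ is triangle-free we have $t=0$, so the ratio is $k/(3k/4)=4/3$. I expect the Nemhauser--Trotter plus Gr\"otzsch argument to be the only real content; I note that for the plain $3/2$ bound on an arbitrary planar $G$ one can run the identical argument on all of $G$ with the Four-Color Theorem ($\is\ge|V_{1/2}|/4$, hence gap $3/2$) in place of Gr\"otzsch, so the triangle refinement is really there to produce the parameterized ratio needed later for index coding.
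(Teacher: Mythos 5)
Your proof is correct and follows essentially the same route as the paper's: bound the numerator by $\vc(G)\le 3t+k$, lower-bound the denominator by packing the $t$ triangles plus a fractional matching of $H=G[V\setminus T]$, and then reduce everything to the vertex-cover integrality gap being $\le 4/3$ on triangle-free planar graphs via half-integrality and Gr\"otzsch. The one cosmetic difference is that you cite the Nemhauser--Trotter decomposition to write $\vc(H)=|V_1|+\vc(H[V_{1/2}])$ and then bound $\vc(H[V_{1/2}])$ through Gr\"otzsch's $3$-coloring, whereas the paper constructs the rounded cover directly from the half-integral LP solution by $3$-coloring the $1/2$-valued vertices and keeping the two smaller color classes; the two computations produce the identical bound $|V_1|+\tfrac23|V_{1/2}|$, so this is a repackaging rather than a different argument. (The paper also proves $\scap(G)\le 3t+k$ via the entropy chain rule on $T\cup C\cup I$ rather than invoking \eqref{eq:vc}, but again these are equivalent.) Your closing remark that the same rounding with the Four-Color Theorem recovers the plain $3/2$ bound on arbitrary planar graphs is a valid side observation, and your explanation that the triangle-parameterized form is what feeds into the index-coding bound matches the paper's intent.
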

\begin{proof}
Let $G'=G[V\setminus T]$. Partition the set of vertices $V$ into $T\cup C \cup I$ where $C$ is the minimum vertex cover of $G'$ and $I\subset V\setminus T$ is therefore an independent set. Let $X_V$ be the set of variables that achieve storage capacity. Therefore, 
\[
\scap(G)=H(X_V)=H(X_T)+H(X_C|X_T)+H(X_I|X_C,X_T)\leq H(X_T)+H(X_C|X_T)\leq 3t+k
\]
since for each $v\in I$, $H(X_v|X_C,X_T)=0$ since $N(v)\subset C\cup T$.

Consider the fractional clique packing in which each of the $t$ vertex-disjoint triangles in $T$ receive weight $1$. Then,
$\fcc(G)\geq 2t+\fcc(G')$.
Then it remains to show that $\fcc(G')\geq 3k/4$. 
%
%\begin{claim}
%$\fcc(G')\geq 3k/4$.
%\end{claim}
%\begin{proof}[Proof of Claim]
Note that since $G'$ is triangle-free planar graph, it is 3-colorable by Gr\"otzsch's theorem \cite{grotzsch59}. Furthermore, $\fcc(G')$ is the maximum fractional matching which, by duality, is the minimum fractional vertex cover. Hence it suffices to show that the size of the minimum fractional vertex cover of a 3-colorable graph is at least $3/4$ of the size of the minimum (integral) vertex cover, i.e., $3k/4$. This can be shown as follows. Let  $x_1, \ldots, x_n$ be an optimal fractional vertex cover, i.e., for all edges $uv\in G'$, $x_u+x_v\geq 1$. Since fractional vertex cover is $1/2$-integral (i.e., there exists an optimal solution where each variable is either integral or $1/2$)~\cite{hochbaum1982approximation}, we may assume each $x_u\in \{0,1/2,1\}$. Let $I_1,I_2,I_3$ be a partitioning of $\{u\in [n]:x_u=1/2\}$ corresponding to a 3-coloring where 
\[\sum_{v\in I_1} x_v\geq \sum_{v\in I_2} x_v\geq \sum_{v\in I_3} x_v \ .\] Then consider $y_1, \ldots, y_n$ where $y_u=1$ iff $u\in I_2\cup I_3$ or $x_u=1$. Then 
\[
\sum_{u\in [n]} y_u\leq \sum_{u\in I_2\cup I_3} y_u
+ \sum_{u\in [n]:x_u=1} y_u
\leq 2/3\cdot 2 \cdot \sum_{u: x_u=1/2} x_u+ \sum_{u: x_u=1} x_u\leq 4/3 \cdot \fcc(G') \ ,
\]
and $y_1, \ldots, y_n$ is a vertex cover because for every edge $uv$, at least one endpoint one of $\{x_u,x_v\}$ is 1 or at least one of $u$ and $v$ is in $I_2\cup I_3$. 
%\end{proof}
%
%In each triangle in $T$ we can store at least 2 and at most 3 bits of information. Since the remaining graph $G'$ is triangle-free and planar, $\fm(G') = \fvc(G') \geq 3/4 \vc(G')$. Thus, approximation factor for $\scap(G)$ is at most
%$$\frac{3t + \vc(G')}{2t + \fvc(G')} \leq \frac{3t + \vc(G')}{2t + 3/4 \vc(G')}$$
%Which can be as large as $3/2$ in some graphs. For example:
%\begin{center}
%\includegraphics[scale=0.5]{planar_ex.png}
%\end{center}
%And is $4/3$ in triangle-free planar graphs.\\
\end{proof}

We next use the result of the previous theorem, together with the chromatic number of planar and triangle-free planar graphs to achieve a $2$ approximation for $\ind(G)$.

\begin{theorem} \label{thm:ind}
Assume $G$ is planar and let $T$ be a set of $3t$ vertices corresponding to $t$ vertex disjoint triangles.  Then,
\[1 \leq \frac{n-\fcc(G)}{\ind(G)} \leq \begin{cases} 
\frac{3n+3t}{4n-12t}+\frac{3}{4} & \mbox{for } t \leq \frac{n}{12} \\
\frac{t}{n}+\frac{7}{4} & \mbox{for } \frac{n}{12} \leq t \leq \frac{n}{4} \\
4 - \frac{8t}{n} &  \mbox{for } t \geq \frac{n}{4}\end{cases} 
\ . \]
Maximizing over $t$ implies that $n-\fcc(G)$ is a $2$ approximation for $\ind(G)$ and a $3/2$ approximation if $G$ is triangle-free.
\end{theorem}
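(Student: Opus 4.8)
The plan is to feed the two inequalities already derived inside the proof of Theorem~\ref{thm:scap} into the identity $\scap(G)=n-\ind(G)$, augment them with two further $k$-free lower bounds on $\ind(G)$, and then optimize the resulting ratio over the only free quantity, $k=|\vc(G[V\setminus T])|$, regime by regime in $t$. The left inequality is immediate: by Lemma~\ref{lem:scaplb}, $\fcc(G)\le\scap(G)=n-\ind(G)$, so $n-\fcc(G)\ge\ind(G)$.

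For the upper bound, write $G'=G[V\setminus T]$ and $k=|\vc(G')|$. The proof of Theorem~\ref{thm:scap} establishes $\fcc(G)\ge 2t+\fcc(G')\ge 2t+3k/4$ and $\scap(G)\le 3t+k$; the first gives the numerator bound $n-\fcc(G)\le n-2t-3k/4$, and the second, via $\scap(G)=n-\ind(G)$, gives $\ind(G)\ge n-3t-k$. I would then add two more lower bounds on $\ind(G)$ that are independent of $k$: since $\ind(G)\ge\alpha(G)$ and a maximum independent set of the triangle-free planar graph $G'$ is independent in $G$, Gr\"otzsch's theorem yields $\ind(G)\ge\alpha(G')\ge(n-3t)/3$; and since $G$ is planar, $\ind(G)\ge\alpha(G)\ge n/4$. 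Combining,
\[
\frac{n-\fcc(G)}{\ind(G)}\ \le\ \frac{n-2t-3k/4}{\max\bigl(\,n-3t-k,\ (n-3t)/3,\ n/4\,\bigr)}\ =:\ R(k,t).
\]

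Now I would maximize $R(k,t)$ over $k\in[0,n-3t]$ for each $t\in[0,n/3]$. Put $a=n-3t$ and $M=\max(a/3,n/4)$, so the denominator is $\max(a-k,M)$ while the numerator is decreasing in $k$. On the range where $a-k\ge M$ the ratio equals $\frac{n-2t-3k/4}{a-k}$, whose derivative in $k$ has the sign of $n-2t-\tfrac{3}{4}a=\tfrac{1}{4}(n+t)>0$, so it is increasing there; on the range where $a-k\le M$ the ratio equals $\frac{n-2t-3k/4}{M}$, which is decreasing. Hence $\max_k R(k,t)$ is attained at $k=\max(0,a-M)$. Substituting, a short computation yields the value $\tfrac{3(n-t)}{2(n-3t)}=\tfrac{3n+3t}{4n-12t}+\tfrac{3}{4}$ when $t\le n/12$ (here $M=a/3$ and $a-M=2a/3$), the value $\tfrac{7}{4}+\tfrac{t}{n}$ when $n/12\le t\le n/4$ (here $M=n/4$ and $a-M=\tfrac{3n}{4}-3t\ge0$), and the value $4-\tfrac{8t}{n}$ when $t\ge n/4$ (here $a-M\le0$, so the maximum is at $k=0$); these three pieces agree at $t=n/12$ and $t=n/4$. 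Each piece is monotone in $t$ (the first and second increasing, the third decreasing), so the maximum over $t$ is $2$, attained at $t=n/4$, and equals $3/2$ at $t=0$; since a maximal vertex-disjoint triangle packing determines some particular $t$ for which the displayed bound holds — and $t=0$ when $G$ is triangle-free — this gives the stated $2$- and $3/2$-approximations.

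The argument is essentially bookkeeping built on Theorem~\ref{thm:scap}; the genuinely load-bearing inputs are the two $k$-free bounds, with $\ind(G)\ge(n-3t)/3$ (Gr\"otzsch) controlling the small-$t$ regime where $n/4$ is too weak, and $\ind(G)\ge n/4$ (planarity) controlling the large-$t$ regime. The only place that needs care is verifying that the substitution at $k=\max(0,a-M)$ reproduces exactly the three rational expressions in the statement and that they match continuously at the breakpoints $t=n/12$ and $t=n/4$.
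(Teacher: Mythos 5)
Your proof is correct and follows essentially the same route as the paper: the numerator estimate $n-\fcc(G)\le n-2t-\tfrac34 k$ comes directly from Theorem~\ref{thm:scap}, the denominator uses $\ind(G)\ge\alpha(G')$ together with the Gr\"otzsch and four-color lower bounds, and the three regimes in $t$ correspond exactly to the three pairings of numerator/denominator bounds that the paper combines via a $\min$. The only cosmetic difference is that you make the optimization over $k=\vc(G')$ explicit (identifying the maximizer $k^*=\max(0,a-M)$), whereas the paper simply exhibits the three resulting expressions and takes their minimum; your presentation makes it a little clearer why exactly three pieces arise.
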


%\begin{theorem}
%The code based on the fractional clique partition of $G$ into $T$ and $M'$ constructed by the algorithm above gives a $\frac{3t+\vc(G')}{2t+3/4\vc(G')}$-approx to $\scap(G)$ and an $\alpha$-approx to $\ind(G)$ where
%\[
%\alpha = 
%\begin{cases} 
%\frac{3n+3t}{4n-3t}+\frac{3}{4} \quad \mbox{for } 0 \leq t \leq \frac{179-\sqrt{16681}}{192} n \approx 0.26 n\\
%4 - \frac{8t}{n} \quad \mbox{for } t \geq \frac{179-\sqrt{16681}}{192} n\end{cases} 
%\]
%In particular, the best approximation factor is $4/3$ achieved at $t = n/3$, the worst is $\approx 1.923$ at $t = \frac{179-\sqrt{16681}}{192} n$, and at $t=0$ (triangle-free planar graph) we get $3/2$-approximation.
%\end{theorem}

\begin{proof}
From Theorem~\ref{thm:scap}, we know that $\fcc(G) \geq 2t + 3/4 \vc(G')$ where $G'=G[V\setminus T]$. Therefore, we can bound $n-\fcc(G)$ as follows:
\begin{align*}
n-\fcc(G)  & \leq n - (2t + 3/4 \vc(G')) \\
&= n - (2t + 3/4(n-3t-\is(G')))\\
&= (n+t)/4 + 3/4\is(G')
\end{align*}
On the other hand,
\[\ind(G)  \geq \is(G) \geq \is(G') \]
where $\is$ denotes the size of the maximum independent set of the graph.
Note that $\is(G) \geq n/4$ since $G$ is planar and thus 4-colorable \cite{appel1977b,appel1977}. Since $G'$ has $n-3t$ vertices and is triangle-free and planar and thus 3-colorable  \cite{grotzsch59}, 
$n-3t \geq \is(G') \geq (n-3t)/3$.
By combining inequalities above we get
\begin{eqnarray*}
\frac{n-\fcc(G)}{\ind(G)} 
& \leq &  
 \min \left (\frac{(n+t)/4 + 3/4\is(G')}{\is(G)}, \frac{(n+t)/4 + 3/4\is(G')}{\is(G')}, \frac{(n+t)/4 + 3/4\is(G)}{\is(G)} \right ) \\
& \leq & 
 \min \left (\frac{(n+t)/4 + 3/4(n-3t)}{n/4}, \frac{(n+t)/4}{(n-3t)/3} + 3/4, \frac{(n+t)/4}{n/4} + 3/4 \right ) \\
&= & \min \left (4 - \frac{8t}{n},\frac{3n+3t}{4n-3t} + \frac{3}{4}, \frac{t}{n}+\frac{7}{4} \right ) 
\ .
\end{eqnarray*}
\end{proof}

\subsection{Approximation algorithms for fixed alphabet size}

In this section, we present improved polynomial time algorithms for approximating the storage capacity and the index coding rate on the assumption that the alphabet size $q$ of the codes is a fixed constant. These are based on combining the exact algorithm described in Section \ref{sec:exact} with the planar separator theorem \cite{LiptonT80}.

\begin{lemma}\label{lem:removenodes}
Let $G'$ be a graph formed by removing at most $k$ vertices from $G$. Then,
\[ \scap_q(G)-k \leq \scap_q(G')\leq \scap_q(G) \ .\]
%and 
%\[ \ind_q(G)-k \leq \ind_q(G')\leq \ind_q(G)+k \ .\]
\end{lemma}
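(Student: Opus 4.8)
The plan is to prove the two inequalities separately, working directly with storage codes (equivalently, with distributions $\{X_i\}$ satisfying the constraints $H(X_i\mid X_{N(i)})=0$) rather than going through the confusion-graph or index-coding pictures. Write $V'=V\setminus S$ for the vertex set of $G'$, where $S$ is the set of removed vertices, $|S|\le k$; note that for $i\in V'$ we have $N_{G'}(i)=N_G(i)\cap V'$, so $N_G(i)=N_{G'}(i)\cup (N_G(i)\cap S)$.

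\emph{Upper bound $\scap_q(G')\le\scap_q(G)$.} I would take an optimal storage code $\{X_i\}_{i\in V'}$ for $G'$ and extend it to all of $V$ by assigning each vertex of $S$ a constant (a degenerate random variable of entropy $0$). For $i\in S$ the constraint $H(X_i\mid X_{N_G(i)})=0$ is trivial; for $i\in V'$ the extra conditioning variables $X_{N_G(i)\cap S}$ are constants, so $H(X_i\mid X_{N_G(i)})=H(X_i\mid X_{N_{G'}(i)})=0$. Since the $S$-variables are constant, $H(X_V)=H(X_{V'})=\scap_q(G')$, and this exhibits a valid $G$-code of that entropy, giving the inequality.

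\emph{Lower bound $\scap_q(G')\ge\scap_q(G)-k$.} I would take a storage code $\{X_i\}_{i\in V}$ achieving $\scap_q(G)$ (a maximizer exists because, for fixed $q$, there are only finitely many possible tuples of recovery functions). From the chain rule, $\scap_q(G)=H(X_V)=H(X_S)+H(X_{V'}\mid X_S)\le k+H(X_{V'}\mid X_S)$, since $H(X_S)\le|S|\le k$ in $q$-ary units, so $H(X_{V'}\mid X_S)\ge\scap_q(G)-k$. Writing $H(X_{V'}\mid X_S)$ as the average of $H(X_{V'}\mid X_S=\bar x_S)$ over $\bar x_S$ in the support, there is one value $\bar x_S$ with $H(X_{V'}\mid X_S=\bar x_S)\ge\scap_q(G)-k$. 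Define $\{X_i'\}_{i\in V'}$ to be jointly distributed as $\{X_i\}_{i\in V'}$ conditioned on $X_S=\bar x_S$; then $H(X'_{V'})\ge\scap_q(G)-k$. For feasibility, fix $i\in V'$: $H(X_i\mid X_{N_G(i)})=0$ means $X_i=f_i(X_{N_G(i)})$ almost surely for a deterministic $f_i$, and substituting the fixed values $\bar x_{N_G(i)\cap S}$ into $f_i$ shows that, under the conditioned distribution, $X_i$ is a deterministic function of $X_{N_{G'}(i)}$ alone, i.e.\ $H(X'_i\mid X'_{N_{G'}(i)})=0$. Hence $\{X'_i\}_{i\in V'}$ is a valid storage code for $G'$ of entropy at least $\scap_q(G)-k$.

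\emph{Main obstacle.} Neither direction is deep; the only place needing care is the lower bound, where two things have to hold simultaneously for a single conditioning event $X_S=\bar x_S$: the averaging step may cost at most $k$ units of entropy (which holds because $H(X_S)\le k$), and that \emph{same} event must preserve \emph{every} recovery constraint at once (which holds because fixing the values of the removed neighbors simply specializes each recovery function $f_i$ to a recovery function of the surviving neighbors). Combining the two bounds yields the stated sandwich $\scap_q(G)-k\le\scap_q(G')\le\scap_q(G)$.
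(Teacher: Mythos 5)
Your proposal is correct and follows essentially the same strategy as the paper: extend a $G'$-code to a $G$-code by assigning constants to the removed vertices for the upper bound, and for the lower bound condition an optimal $G$-code on a well-chosen value of the removed vertices, observing that this specializes each recovery function while losing at most $k$ entropy. The only cosmetic difference is that you phrase the lower bound in the probabilistic picture and pick the conditioning value maximizing residual entropy, while the paper works with the codeword set and picks the most common $k$-tuple; these coincide once one takes the optimal distribution to be uniform over a storage code.
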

%\comment{i was a little surprised to see the $k$ appear on both sides. can this bound be improved?}
\begin{proof}
Consider an optimal storage code and fix the value of the $k$ vertices to their most common $k$-tuple of values. This decreases the size of the code by at most a factor $q^k$ and hence decreases the storage capacity by at most $k$. The fact that $\scap_q(G')\leq \scap_q(G)$ follows because any code for $G'$ is also a code for $G$.
%To prove the result for index coding, let $v_1,\ldots, v_k$ be the nodes that are in $G$ but not in $G'$ and let $X_1,\ldots, X_k$ be the corresponding variables.. First consider an optimal code for $G$. This gives rise to a code for $G'$ of rate $\ind_q(G)+k$ by adding variables $X_1,\ldots, X_k$ to $Y$. Hence, $\ind_q(G')\leq \ind_q(G)+k$. Similarly, consider an optimal code for $G'$. This gives rise to a code for $G$ of rate $\ind_q(G')+k$ by adding $k$ uniformly random variables $X_1,\ldots, X_k$ to $Y$ that will be used for nodes $v_1, \ldots, v_k$. Hence $\ind_q(G)\leq \ind_q(G')+k$.
%  If $G'$ is missing follows by exploiting the relationship between $\scap_q$ and $\ind_q$ in Equation~\ref{eq:indsc}, i.e., 
%\[ \ind_q(G') \geq  (n-k)-\scap_q(G') \geq n-\scap_q(G)\geq \ind_q(G)-\log_q(n \ln q)\ . \]
%To prove that $\ind_q(G')\leq \ind_q(G)+k$, consider a coding scheme if $G$. If  then including 
%\[ \ind_q(G') \leq  n-\scap_q(G')+\log_q(n\ln q) \leq n-\scap_q(G) +k+\log_q(n\ln q)\leq \ind_q(G)+k+\log_q(n \ln q)\ .\]
\end{proof}

The following lemma shows that the removal of a relatively small number of vertices in a planar graph is sufficient to ensure that the remaining components are small. The proof is  essentially the same as that used to establish Theorem 3 in \cite{LiptonT80}. 

\begin{lemma}\label{lem:planarsep}
There exists $n_0\leq \epsilon n$ vertices $V_0$ in a planar graph whose removal yields a graph with at most $r=O(\epsilon^{2} n)$ components $G_1, G_2, \ldots , G_r$ where each $G_i$ has $n_i=O(1/\epsilon^2)$ vertices and there are no edges between $G_i$ and $G_j$.  Furthermore,  $V_0$, $G_1, G_2,\ldots$ can by found in polynomial time.
\end{lemma}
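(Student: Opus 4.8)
The plan is to follow the iterated planar separator argument from Lipton--Tarjan, tracking carefully that the number of removed vertices stays below $\epsilon n$ and that the final components have size $O(1/\epsilon^2)$. Recall that the planar separator theorem provides, for any planar graph on $N$ vertices, a partition $A \cup S \cup B$ of the vertex set with no edges between $A$ and $B$, where $|S| \le c\sqrt{N}$ for an absolute constant $c$ (one may take $c = \sqrt{8}$), and $|A|, |B| \le \tfrac{2}{3} N$. First I would apply this recursively: maintain a collection of components, and repeatedly apply the separator theorem to any component with more than some threshold $\tau = \Theta(1/\epsilon^2)$ vertices, adding the separator vertices to the growing set $V_0$ and replacing the component by its two pieces $A$ and $B$. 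Since each split reduces the size of the largest remaining component by a constant factor, the recursion halts after $O(\log N)$ levels on any root-to-leaf path, and every surviving component has at most $\tau = O(1/\epsilon^2)$ vertices, giving the bound $n_i = O(1/\epsilon^2)$. Because splitting never creates edges between the two new pieces and $V_0$ absorbs all separator vertices, there are no edges between distinct surviving components.

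Next I would bound $|V_0|$. This is the standard charging argument: at a level of the recursion where the components being split have sizes $N_1, N_2, \ldots$, the separators removed have total size at most $c\sum_j \sqrt{N_j}$, and since $\sum_j N_j \le n$ and the number of components at that level with more than $\tau$ vertices is at most $n/\tau$, Cauchy--Schwarz gives $\sum_j \sqrt{N_j} \le \sqrt{(n/\tau)\cdot n} = n/\sqrt{\tau}$. Summing the geometric-type bound over the $O(\log n)$ levels (or, more carefully, over levels indexed so that component sizes at level $\ell$ are at most $(2/3)^\ell n$, only those exceeding $\tau$ contributing) yields $|V_0| = O(n/\sqrt{\tau})$. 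Choosing $\tau = \Theta(1/\epsilon^2)$ with a large enough hidden constant makes $|V_0| = O(\epsilon n) \le \epsilon n$. The count of surviving components is then $r \le n/1 \le n$ trivially, but more precisely at most $n/(\text{min component size})$; to get $r = O(\epsilon^2 n)$ one should argue that one need not split components already at or below $\tau$, so that there are at most $n/\tau = O(\epsilon^2 n)$ of them (components of size between $1$ and $\tau$ can be merged conceptually, or one simply notes $r \le n$ suffices for the downstream application — but the stated $O(\epsilon^2 n)$ follows from the fact that we only keep splitting until size $\le \tau$ and never below, so each kept component, except possibly via unavoidable imbalance, has $\Omega(\tau)$ vertices; a cleaner route is to stop splitting a component once its size drops to the range $[\tau/3, \tau]$, which is always possible by the $2/3$ balance guarantee, forcing $r \le 3n/\tau = O(\epsilon^2 n)$). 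Finally, polynomial-time constructibility is immediate since the separator theorem is constructive in linear time and the recursion has polynomially many nodes.

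The main obstacle is the bookkeeping in the second paragraph: one must be careful that the per-level separator cost, summed over all $O(\log n)$ levels, still telescopes to $O(n/\sqrt{\tau})$ rather than $O(n \log n /\sqrt{\tau})$. The resolution, exactly as in Lipton--Tarjan, is to only count a component toward the cost at the level at which it is split, and to observe that a component of size $N$ is split at most once, contributing $c\sqrt{N}$, with the total size of components of size in $((2/3)^{\ell+1} n, (2/3)^\ell n]$ being at most $n$; summing $c\sqrt{N}$ over these with the constraint that there are at most $n/((2/3)^{\ell+1}n)$ such components at level $\ell$ gives a per-level bound of $c\sqrt{n\cdot n/((2/3)^{\ell+1})} \cdot$ (correction factor), and the geometric decay in $\ell$ once sizes drop below $\tau$ cuts off the sum, leaving $O(n/\sqrt{\tau})$. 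Since this is precisely the computation behind Theorem~3 of \cite{LiptonT80}, I would cite it rather than reproduce it in full, noting only the choice $\tau = \Theta(1/\epsilon^2)$ needed to hit the target bounds $n_0 \le \epsilon n$, $n_i = O(1/\epsilon^2)$, and $r = O(\epsilon^2 n)$.
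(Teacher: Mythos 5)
Your proof follows the same iterated Lipton--Tarjan separator argument as the paper: recursively apply the planar separator theorem to components larger than a threshold $\tau = \Theta(1/\epsilon^2)$, add separator vertices to $V_0$, and bound $|V_0|$ by summing separator costs over levels, exploiting the fact that the per-level cost grows geometrically so the sum is dominated by the last level and comes out to $O(n/\sqrt{\tau}) = O(\epsilon n)$. This is exactly the computation the paper sketches and attributes to Theorem~3 of Lipton--Tarjan. The one place you go slightly beyond the paper's written proof is the explicit argument that $r = O(\epsilon^2 n)$: the paper's proof text does not address the component count, while you correctly observe that the $\tfrac{2}{3}$-balance guarantee of the separator (together with $|S| \le c\sqrt{N} \ll N/3$ for $N > \tau$) means each surviving component has $\Omega(\tau)$ vertices, hence $r = O(n/\tau) = O(\epsilon^2 n)$. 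That is a small but genuine gap in the paper's write-up that your argument fills.
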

\begin{proof}
The planar separator theorem \cite{LiptonT80} states that in any planar graph there exists a set of 
%$f(n)\leq 
at most $c \sqrt{n}$ vertices (for some large constant $c$) such that the removal of these vertices disconnects the graph into components of size at most $2n/3$. Furthermore, these vertices can be found in polynomial time. Consider applying  this recursively on all components until they each have size less than $\alpha c^2/\epsilon^2$ for a sufficiently large constant $\alpha>1$ . Then, the total number of vertices removed is at most
\begin{eqnarray*}
|V_0| 
%& \leq  & 
%f(n)+
%f((2/3)n) \frac{n}{(2/3)n}+
%f((2/3)^2 n) \frac{n}{(2/3)^2 n}+\ldots + f(\alpha c^2\epsilon
%\\
&\leq & c \sqrt{n} \frac{n}{(2/3)n}+
 c \sqrt{(2/3)n} \frac{n}{(2/3)^2n}+
 c \sqrt{(2/3)^2n} \frac{n}{(2/3)^3n}+ \ldots 
\ldots
+  c \sqrt{(3/2) \alpha c ^2/\epsilon^2} \frac{n}{\alpha c^2/\epsilon^2}
\\
& \leq &
  \epsilon n
  \end{eqnarray*}
because, during the course of the recursion, the planar separator theorem is applied to at most $n/((2/3)^i n)$ graphs of size between $(2/3)^i n$ and $(2/3)^{i-1} n$ and each such application involves the removal of at most $c\sqrt{(2/3)^{i-1} n}$ vertices.
%
%
%
%$r= \log_{2} (c^{-1} \epsilon \sqrt{n} /4)$ times ensures that with the removal of 
%\[
%f(n)+2f(2n/3)+2^2 f((2/3)^2 n)+\ldots + 2^{r-1} f((2/3)^{r-1}n)\leq 
%2^r cn^{1/2}\leq \epsilon n/4
%%(1+2 (2/3)^{1/2}+2^2 (2/3)^{2/2}+\ldots + 2^{r-1} (2/3)^{(r-1)/2})
%\]
%nodes ensures that all remaining components have size at most 
%\[
%(2/3)^r n = \log_{2} (c^{-1} \epsilon \sqrt{n} /4)
%\]
%
\end{proof}

\begin{theorem}
There exists polynomial time algorithms to  approximate $\ind_q(G)$ up to a multiplicative factor of $1+\epsilon$ and  approximate $\scap_q(G)$ up additive error $\epsilon n$.
\end{theorem}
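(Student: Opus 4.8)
The plan is to combine Lemma~\ref{lem:planarsep} (the ``split a planar graph into small components'' statement) with Lemma~\ref{lem:removenodes} and the exact algorithm of Section~\ref{sec:exact}. First I would invoke Lemma~\ref{lem:planarsep} to find, in polynomial time, a set $V_0$ of at most $\epsilon n$ vertices whose deletion leaves components $G_1,\ldots,G_r$ each of size $O(1/\epsilon^2)$. Since $q$ is a fixed constant and each $|V(G_i)| = O(1/\epsilon^2) = O(1)$, the exact algorithm of the theorem in Section~\ref{sec:exact} runs in time $O(\min\{1.1996^{q^{n_i}}, m_i q^{n_i q^{\Delta_i}}\})$ on each $G_i$, which is a constant depending only on $q$ and $\epsilon$; summing over the $r = O(\epsilon^2 n)$ components gives an overall polynomial (indeed, for fixed $\epsilon$, linear) running time. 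Because there are no edges between distinct $G_i$, and entropy is additive over independent blocks, $\scap_q(G[V\setminus V_0]) = \sum_{i=1}^r \scap_q(G_i)$, so we can compute $\scap_q(G[V\setminus V_0])$ exactly.

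For the storage-capacity claim, Lemma~\ref{lem:removenodes} applied with $k = |V_0| \le \epsilon n$ gives
\[
\scap_q(G[V\setminus V_0]) \;\le\; \scap_q(G) \;\le\; \scap_q(G[V\setminus V_0]) + \epsilon n,
\]
so returning $\sum_i \scap_q(G_i)$ is an estimate of $\scap_q(G)$ with additive error at most $\epsilon n$, as required.

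For the index-coding claim, I would pass through the identity $\ind_q(G) = n - \scap_q(G)$ used implicitly earlier, or more precisely use \eqref{eq:indsc}; but to get a \emph{multiplicative} $(1+\epsilon)$-approximation one cannot simply take $n - (\text{additive estimate of }\scap_q)$, since $\ind_q(G)$ may be much smaller than $n$. Instead I would observe that $\ind_q$ is also computed exactly on each small component $G_i$ by the Section~\ref{sec:exact} algorithm, and that $\ind_q$ is subadditive under vertex deletion: adding back the $|V_0| \le \epsilon n$ vertices of $V_0$ costs at most $|V_0|$ extra units in the index code (broadcast those coordinates in the clear), so $\ind_q(G) \le \epsilon n + \sum_i \ind_q(G_i)$, while trivially $\ind_q(G) \ge \sum_i \ind_q(G_i)$ since an index code for $G$ restricted to a component is an index code for that component. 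Now use $\ind_q(G) \ge \alpha(G) \ge n/4$ for planar $G$: since the error $\epsilon n \le 4\epsilon \cdot \ind_q(G)$, the estimate $\sum_i \ind_q(G_i) + |V_0|$ (or even just $\sum_i \ind_q(G_i)$) is within a multiplicative factor $1 + 4\epsilon$ of $\ind_q(G)$, and rescaling $\epsilon$ gives the stated $1+\epsilon$.

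The main obstacle is the last point: getting a \emph{multiplicative} guarantee for $\ind_q$ rather than just an additive one. This hinges on the planar lower bound $\ind_q(G) \ge n/4$ to convert the unavoidable additive $\epsilon n$ slack (coming from the separator $V_0$) into a relative error; without a lower bound linear in $n$ on $\ind_q(G)$ one would only get an additive approximation, matching what we prove for $\scap_q$. Everything else is bookkeeping: verifying that the exact algorithm's running time on each component is $O(1)$ for fixed $q,\epsilon$, that the components are genuinely disconnected so capacities and rates add, and that the separator construction and the component solves are all polynomial time.
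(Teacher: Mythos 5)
Your decomposition (Lemma~\ref{lem:planarsep}), your treatment of $\scap_q$ via Lemma~\ref{lem:removenodes}, and your use of $\ind_q(G)\geq \alpha(G)\geq n/4$ to convert an additive-$\epsilon n$ error into a multiplicative $1+O(\epsilon)$ error are all exactly the paper's approach, and the $\scap_q$ half of your argument is fine. The gap is in the index-coding half, specifically the line ``$\ind_q(G) \ge \sum_i \ind_q(G_i)$ since an index code for $G$ restricted to a component is an index code for that component.'' Restricting an index code for $G$ by fixing $X_{V_0}$ and then projecting onto $G_i$ only shows $\ind_q(G_i)\leq \ind_q(G)$ for \emph{each single} $i$; it does not give you the \emph{sum}. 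To get $\sum_i \ind_q(G_i)\leq \ind_q(G)$ you would first need $\ind_q(G')\leq \ind_q(G)$ for $G'=G[V\setminus V_0]$ (true, via the confusion-graph characterization $\ind_q=\log_q\chi(\mathcal G(\cdot))$, since $\mathcal G(G')$ is an induced subgraph of $\mathcal G(G)$), and then you would need $\ind_q(G')\geq\sum_i\ind_q(G_i)$, i.e.\ super-additivity of $\ind_q$ over disjoint unions. The latter is not at all trivial: $\mathcal G(G_1\sqcup\cdots\sqcup G_r)$ is the \emph{disjunction} (co-normal product) of the $\mathcal G(G_i)$, and while $\alpha$ is multiplicative over disjunctions (which is why the analogous additivity \emph{does} hold for $\scap_q$), the chromatic number of a disjunction is only known to be \emph{sub}-multiplicative, $\chi(H_1 * H_2)\leq\chi(H_1)\chi(H_2)$, with the reverse inequality in doubt in general. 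So the inequality you call ``trivial'' is the crux, and your justification does not establish it.

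The paper sidesteps this entirely: rather than comparing $\sum_i\ind_q(G_i)$ directly to $\ind_q(G)$, it converts each $\ind_q(G_i)$ to $n_i-\scap_q(G_i)+\log_q(n_i\ln q)$ via~\eqref{eq:indsc}, uses the easy sub-additivity $\scap_q(G')\leq\sum_i\scap_q(G_i)$, converts back with~\eqref{eq:indsc}, and then uses $\ind_q(G')\leq\ind_q(G)+n_0$. This introduces an extra $r\log_q(\max_i n_i\ln q)=O(\epsilon^2 n\log_q(\epsilon^{-2}\ln q))$ of slack, which together with $n_0\leq\epsilon n$ and the $\ind_q(G)\geq n/4$ lower bound yields the $1+O(\epsilon)$ multiplicative factor after rescaling $\epsilon$. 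If you want to keep your cleaner-looking bound, you would have to either prove multiplicativity of $\chi$ for the disjunctions arising here, or (easier) replace the offending inequality with the paper's detour through~\eqref{eq:indsc} and carry the logarithmic slack terms through.
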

\begin{proof}
The result for $\scap_q(G)$ follows by applying the decomposition in Lemma \ref{lem:planarsep} to $G$ and then solving the problem optimally on $G_1, G_2,\ldots$ using the algorithm in Section \ref{sec:exact}.  Let $G'=G_1\cup G_2 \cup \ldots$. By Lemma \ref{lem:removenodes},
\[
\scap_q(G)-n_0 \leq \scap_q(G')= \scap_q(G_1)+\scap_q(G_2) +  \ldots
\]
and hence finding the optimal solution for each $\scap_q(G_i)$ yields an additive $n_0\leq \epsilon n$ approximation as required.

The result for $\ind_q(G)$ is similar. By using the results in   Section \ref{sec:exact}, we can find the optimal index coding solution of each $G_i$. Combining these with the naive solution for the $n_0$ vertices in $V_0$ yields a solution with rate $n_0+\sum_{i\geq 1} \ind_q(G_i)$.
We can relate $\ind_q(G)$ to this as follows:
%
%  $\scap_q(G)$  can get an additive $\epsilon n$ approximation for storage capacity by removing the $\epsilon$ in the planar separator and
%
%Let $G_1, G_2, \ldots, G_r$ be then 
\begin{eqnarray*}
n_0+\sum_{i\geq 1} \ind_q(G_i)
& \leq & 
n_0 +\sum_{i\geq 1} \left ( n_i -\scap_q(G_i) + \log_q (n_i\ln q) \right ) \\
&\leq &  n_0+n-n_0 -\scap_q(G') + r \log_q (\max_i n_i\ln q) \\
%&\leq &  n -\scap_q(G') + r \log_q (\max_i n_i\ln q) \\
&\leq &  n_0+ \ind_q(G') +r \log_q (\max_{i\geq 1} n_i\ln q) \\
%+  \log_q ( (n-n_0)\ln q) \\
&\leq &  \ind_q(G) +2n_0+r \log_q (\max_{i\geq 1} n_i\ln q)\\
% +  \log_q ( (n-n_0)\ln q)\\
&= &  \ind_q(G) \left (1+\frac{2n_0+r \log_q (\max_{i\geq 1} n_i\ln q)}{\ind_q(G)}\right ) \\
&= &  \ind_q(G) \left (1+8\epsilon +O(\epsilon^{2} \log_q (\epsilon^{-2}\ln q)) \right ) \\
\end{eqnarray*}
where the last line follows since $\ind_q(G)\geq n/4$, $n_0\leq \epsilon n$, $r=O(\epsilon^{2}n), n_i=O(\epsilon^{-2})$. Reparamaterizing by $\epsilon\leftarrow \epsilon/c$ for some sufficiently large constant gives the required result.
%, for sufficiently large $n$ this translates into a $1+\epsilon$ approximation for $\ind_q(G)$. \comment{maybe we should take a little more care to avoid the order notation?}
\end{proof}

%\textbf{Note:} We can modify the algorithm to find and eliminate 4-cliques first and then run as above. This way we get a bound in terms of the number of 4-cliques found in stage 1 and the number of triangles found in stage 2 afterwards. However, the worst approximation factor would still be the same since the graph might not have any 4-cliques.

\section{Upper Bounds on Storage Capacity via Multiple Vertex Covers}
\label{sec:upper}
In this section, we start by considering a linear program proposed by Blasiak, Kleinberg, and Lubetzky \cite{BlasiakKL11} that can be used to lower bound the optimum index coding rate or upper bound the storage capacity.\footnote{Blasiak et al.~only consider index coding but it is straightforward to adapt the LP in a natural way for storage capacity; see below.} Unfortunately there are $\Omega(2^n)$ constraints but by carefully selecting a subset of constraints we can prove upper bounds on the storage capacity for a specific graph without solving the LP. 

Our main goal in this section is to relate this linear program to finding a suitable family of vertex covers of the graph. In doing so, we propose a  combinatorial ``gadget" based approach to constructing good upper bounds that we think makes the process of proving strong upper bounds more intuitive. This allows us to prove a more general theorem that gives an upper bound on the storage capacity for a relatively large family of graphs. As an application of this theorem we show that a class of graphs closely related to the family of outerplanar graphs and another family of cartesian product graphs have capacity exactly $n/2$.

%
%a) develop a more intuitive and ``algorithmic" approach to finding the relevant constraints to bound  the storage capacity b) prove upper bounds on the storage capacity for entire families of graphs. In doing so we We will show 
%
%We first adapt this for the purpose of proving bounds on storage capacity. 
%o fit storage capacity.
%
% transform an LP  ossibly a remark how while this bound might be weaker than info-theoretic LP, it is nicely geometric/visual. Also that the LP formulation of the k-cover by gadgets is a minimization and thus any feasible integral solution is a bound on storage capacity (as opposed to info-theoretic LP that is a maximization and only OPT provides a bound)

\subsection{Upper Bound via the ``Information Theoretic" LP}\label{sec:infoLP}
We first rewrite the index coding LP proposed by Blasiak, Kleinberg, and Lubetzky \cite{BlasiakKL11} for the purposes of upper-bounding  storage capacity. We define a variable $z_S$ for every $S \subseteq V$ that will correspond to an upper bound for $H(X_S)$. Let $\cl(S) = S \cup \{v : N(v) \subseteq S \}$ denote the \textit{closure} of the set $S$ consisting of vertices in $S$ and vertices with all neighbors in $S$.
\begin{align*}
\textrm{maximize}\quad & z_V \tag{Information theoretic LP}\\
\textrm{s.t.}\quad & z_{\emptyset} = 0\\
& z_T - z_S \leq  |T \setminus \cl(S)| \quad \forall S \subseteq T \\
& z_S + z_T \geq z_{S \cap T} + z_{S \cup T} \quad \forall S,T
\end{align*}
The second constraint corresponds to 
\[H(X_T)-H(X_S)=H(X_T| X_S) =H(X_T| X_{\cl(S)})\leq H(X_{T\setminus \cl(S)})\leq |T\setminus \cl(S)| \ , \]
whereas the last constraint follows from the sub-modularity of entropy. Hence, the optimal solution to the above LP is an upper bound on $\scap(G)$. We henceforth refer to the above linear program as the \emph{information theoretic} LP.

\subsection{Upper Bound via Gadgets}

\textit{$k$-cover by gadgets} is a technique for proving upper bounds  on the storage capacity of a graph. The core idea is to construct a set of $k$ vertex covers for the graph via the construction of various gadgets which we now define.

%A gadget $g(A,B)$ is created in the following way: take two sets of vertices $A$ and $B$, take their closures $\cl(A)$ and $\cl(B)$, find $S = \cl(A) \cup \cl(B)$ and $T = \cl(A) \cap \cl(B)$. Then $S$ and $T$ form a gadget. Call $S$ the outside of the gadget and $T$ the inside. We note that by taking $A=\{v\}$ and $B=\emptyset$ we obtain a gadget with the outside $\{v\}$ and empty inside (assuming $v$ has no neighbors of degree one); call such gadget \textit{trivial}. Define the weight of a gadget to be $|A|+|B|$. 
%%If we color every inside and outside gadget set with one of $k$ colors such that the union of all sets of the same color forms a vertex cover, the total weight of gadgets in such coloring provides an upper bound on $k \scap(G)$.
%For each inside/outside set in a gadget, we assign one of $k$ colors to all the vertices in that set. Note, that a vertex can participate in multiple gadget sets and therefore can be assigned multiple colors. For color $c$, we want vertices with $c$ assigned to them to form a vertex cover. If that condition is met for all $k$ colors, we show that the total weight of gadgets in such coloring provides an upper bound on $k \scap(G)$.
%Note that for $k=1$ using gadgets with non-empty inside can only increase the total weight, so the only gadgets we need to consider are the trivial ones corresponding to individual vertices, and thus the construction is just a single vertex cover.

A gadget $g = (S_1, S_2, c_1, c_2)$ consists of two special sets of vertices $S_1$ and $S_2$ and two colors $c_1$ and $c_2$. Sets $S_1$ and $S_2$ are created in the following way: take two sets of vertices $A$ and $B$, take their closures $\cl(A)$ and $\cl(B)$, then $S_1 = \cl(A) \cup \cl(B)$ and $S_2 = \cl(A) \cap \cl(B)$. Call $S_1$ the outside of the gadget and $S_2$ its inside, as it is always the case that $S_1 \supseteq S_2$. We note that by taking $A=\{v\}$ and $B=\emptyset$ we obtain a gadget with the outside $\{v\}$ and empty inside (assuming $v$ has no neighbors of degree one); call such gadget \textit{trivial}. Define the weight of a gadget to be $w(g) = |A|+|B|$. As for the two colors of the gadget, they are picked from a fixed set of $k$ colors; $c_1$ is assigned to all vertices in $S_1$ and $c_2$ is assigned to all vertices in $S_2$. Note, that a vertex can be assigned multiple colors in that manner. The objective is to find a set of gadgets, such that for any color $c$, vertices with $c$ assigned to them form a vertex cover. If that condition is met for all $k$ colors, we show that the total weight of the gadgets used provides an upper bound on $k \scap(G)$.

%We can formulate the $k$-cover by gadgets (for fixed $k$) as the following integer linear program:
%Let $x_{S,c}$ be a variable where $S$ is a set that is an outside or an inside of a gadget and $c$ is one of $k$ colors. Each variable has a corresponding weight $w_{S,c} = (|A|+|B|)/2$ where $A$ and $B$ are the 2 sets used to form the gadget that $S$ is a part of.
%$x_{S,c} = 1$ if set $S$ is colored with color $c$ and 0 otherwise.
%\begin{align*}
%\textrm{minimize}\quad &\frac{1}{k}\sum_{S,c} w_{S,c} x_{S,c}\\
%\textrm{ s.t.}\quad& \sum_{S : u \in S} x_{S,c} + \sum_{S : v \in S} x_{S,c} \geq 1 \quad \forall (u,v) \in E, \forall c \\
%%&\sum_c x_{H,c} = \sum_c x_{H',c} \quad\minibox{for all gadgets, where $H$ is the outside of the gadget and \\$H'$ the inside of the same gadget}
%&\sum_c x_{S,c} \leq 1 \quad \forall S
%\end{align*}
%The first condition states that the union of sets of a certain color is a vertex cover and the second one states that every set involved in a gadget is assigned at most one color. 

We can formulate the $k$-cover by gadgets (for fixed $k$) as the following integer linear program:
Let $x_{g,S}$ be a variable where vertex set $S$ is a part of gadget $g$. Note that there are two variables per gadget, corresponding to its outside and inside sets. $x_{g,S} = 1$ if $g$ participates in the cover by gadgets and 0 otherwise. We use $c_g(S) = c$ to denote that the color of vertex set $S$ in gadget $g$ is $c$.
\begin{align*}
\textrm{minimize}\quad &\frac{1}{k}\sum_{g, S} x_{g, S} \cdot w(g)/2 \tag{k-cover by gadgets} \\
\textrm{ s.t.}\quad& \sum_{\substack{g, S : u \in S, \\ c_g(S)=c}} x_{g, S} + \sum_{\substack{g', S' : v \in S', \\ c_{g'}(S')=c}} x_{g', S'} \geq 1 \quad \forall (u,v) \in E, \forall c \\
&x_{g, S} = x_{g, S'} \quad \forall g \textrm{ with outside $S$ and inside $S'$}
\end{align*}
The first condition states that the union of sets of a certain color is a vertex cover and the second one states that both the inside and outside of $g$ participate (or do not participate) in the gadget cover.

\begin{theorem}
Any feasible integral solution to the above $k$-cover by gadgets integer LP is an upper bound on $\scap(G)$.
\end{theorem}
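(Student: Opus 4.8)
The plan is to show that any feasible integral gadget cover yields a valid set of constraints in the information theoretic LP (from Section~\ref{sec:infoLP}) whose combination forces $z_V$ to be at most the stated weight; since the optimal value of that LP upper bounds $\scap(G)$, this suffices. Concretely, fix the variables $X_V$ achieving $\scap(G)$ and work with the true entropies $H(X_S)$ rather than LP variables --- every inequality we use (submodularity and $H(X_T)-H(X_S)\le|T\setminus\cl(S)|$) is literally true for entropy, so it is enough to derive $H(X_V)\le \frac1k\sum_{g,S}x_{g,S}w(g)/2$.

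The first key step is to understand a single gadget $g=(S_1,S_2,c_1,c_2)$ built from sets $A,B$ via $S_1=\cl(A)\cup\cl(B)$, $S_2=\cl(A)\cap\cl(B)$. I would prove the bound
\[
H(X_{S_1})+H(X_{S_2})\le H(X_{\cl(A)})+H(X_{\cl(B)})\le 2H(X_V)-|A|-|B|+\big(\text{correction}\big),
\]
but more usefully the following: by submodularity, $H(X_{S_1})+H(X_{S_2})\le H(X_{\cl(A)})+H(X_{\cl(B)})$, and since $\cl(A)$ depends only on $A$ we have $H(X_{\cl(A)})=H(X_A)+H(X_{\cl(A)\setminus A}\mid X_A)=H(X_A)\le|A|$ --- wait, that is too lossy. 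The right move, mirroring the vertex-cover proof in~\eqref{eq:vc}, is: for any set $S$ that is a vertex cover, $H(X_V)=H(X_S)$, because every vertex outside $S$ has all its neighbors in $S$, so $H(X_{V\setminus S}\mid X_S)=0$; equivalently $\cl(S)=V$ for a vertex cover $S$. So the second step is to use the color condition: for each color $c$, the union $U_c$ of all gadget-sets colored $c$ (over gadgets in the cover) is a vertex cover, hence $H(X_{U_c})=H(X_V)$. Summing over the $k$ colors gives $k\,H(X_V)=\sum_{c}H(X_{U_c})$.

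The third and main step is to bound $\sum_c H(X_{U_c})$ by repeatedly applying submodularity to break each $U_c$ into the individual gadget-sets assigned color $c$, then charging. Since each gadget contributes its outside $S_1$ to one color class and its inside $S_2$ to one color class, the total "budget" on the right is $\sum_{g,S} x_{g,S} H(X_{S})$ plus submodular slack terms; the point is to show the slack, together with the relation $H(X_{S_1})+H(X_{S_2})\le |A|+|B| + \big(\text{something that telescopes against }H(X_V)\big)$ coming from the $A,B$ construction, collapses to exactly $\sum_{g} x_g\, w(g)/2$ after dividing by $k$. The cleanest route is probably: for a single gadget, using $S_1\cup S_2=\cl(A)\cup\cl(B)$ and the closure inequality $H(X_T)-H(X_S)\le|T\setminus\cl(S)|$ with $S=A\cup B$, $T=V$, show $H(X_{S_1})+H(X_{S_2})\le |A|+|B|$ is false in general, so instead aim for the weaker accounting where each gadget of weight $w(g)$ "pays for" $w(g)/2$ units of $H(X_V)$ spread across the two colors it touches; summing, $\sum_c H(X_{U_c}) \le \sum_g x_g w(g)/2$ would be wrong dimensionally, so the correct statement must be $k H(X_V)\le \sum_g x_g w(g)/2$ only after the $1/k$ normalization is absorbed, i.e.\ $H(X_V)\le \frac1k\sum_{g,S}x_{g,S}w(g)/2$, matching the objective.

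The step I expect to be the real obstacle is making the submodularity-charging bookkeeping precise: one must decompose each vertex-cover union $U_c$ as a nested sequence of unions of gadget-sets and account for every $z_{S\cap T}$ term, ensuring no gadget is overcharged and that the closure inequality is applied with exactly the sets $A,B$ that define the gadget's weight. This is essentially the argument implicit in Blasiak--Kleinberg--Lubetzky~\cite{BlasiakKL11} specialized to this gadget language, and the bookkeeping --- not any single inequality --- is where care is needed. I would set it up by induction on the number of gadgets, peeling off one gadget at a time and showing the residual instance is again a valid gadget cover (on a possibly modified graph), so that the weight accumulates additively; the base case is the trivial-gadget cover corresponding to an ordinary vertex cover, recovering~\eqref{eq:vc}.
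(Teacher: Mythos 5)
You have the right ingredients, but you talk yourself out of the argument at the crucial step. The inequality you dismiss as ``too lossy'' and later assert is ``false in general,''
\[
H(X_{S_1})+H(X_{S_2})\le |A|+|B|,
\]
is in fact both true and exactly tight enough. It follows directly from submodularity applied to $\cl(A)$ and $\cl(B)$ (giving $H(X_{S_1})+H(X_{S_2})\le H(X_{\cl(A)})+H(X_{\cl(B)})$), the observation that $H(X_{\cl(A)})=H(X_A)$ (every vertex of $\cl(A)\setminus A$ has all neighbors in $A$), and $H(X_A)\le |A|$. This is the entire per-gadget bound, and you even wrote down the component steps before abandoning them.

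Given that lemma, the rest is a direct charging argument, not an induction and not a delicate decomposition into nested unions. For each color $c$, the union $U_c$ of all gadget-sets colored $c$ is a vertex cover, so $\cl(U_c)=V$ and $H(X_V)=H(X_{U_c})$. Subadditivity gives $H(X_{U_c})\le\sum_{S\in C_c}H(X_S)$. Summing over the $k$ colors,
\[
k\,H(X_V)\;\le\;\sum_{c}\sum_{S\in C_c}H(X_S)\;=\;\sum_{g\in H}\bigl(H(X_{S_1^g})+H(X_{S_2^g})\bigr)\;\le\;\sum_{g\in H}w(g),
\]
where the middle equality holds because each inside and each outside set of a used gadget is assigned exactly one color. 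Since each used gadget $g$ contributes two sets with $x_{g,S}=1$, the LP objective $\frac1k\sum_{g,S}x_{g,S}\,w(g)/2$ equals $\frac1k\sum_{g\in H}w(g)$, so $\scap(G)\le H(X_V)\le\frac1k\sum_{g\in H}w(g)$ is precisely the LP value. No slack terms, no peeling, no dimensional mismatch --- the $w(g)/2$ in the objective is there only because the sum ranges over the two sets of each gadget. You identified the vertex-cover trick and the per-gadget constraints correctly; the gap is that you misdiagnosed the per-gadget bound as insufficient and replaced a two-line finish with an unworked induction.
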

\begin{proof}
%We prove this by showing that $k$-cover by gadgets follows from the fact that the  optimum solution of the information theoretic LP is an upper bound. 
We prove this by considering the information theoretic LP above and showing how to bound $z_V$ by the gadget weights.
First, note which constraints correspond to the steps of creating a gadget:
\begin{align*}
z_A \leq |A| \quad & \textrm{take set $A$}\\
z_B \leq |B| \quad & \textrm{take set $B$}\\
z_{\cl(A)} - z_A \leq 0 \quad  &\textrm{find closure of $A$}\\
z_{\cl(B)} - z_B \leq 0 \quad & \textrm{find closure of $B$}\\
z_{S_1} + z_{S_2} \leq z_{\cl(A)} + z_{\cl(B)} \quad & \textrm{find $S_1 = \cl(A) \cup \cl(B)$ and $S_2 = \cl(A) \cap \cl(B)$}
\end{align*}
If we sum all the constraints, we obtain $z_{S_1} + z_{S_2} \leq |A|+|B|$. Recall that $|A|+|B|$ is the weight of the gadget. Let $H$ be a cover by gadgets. Then by summing all corresponding constraints, we get
%$$z_{S_1} + z_{T_1} + z_{S_2} + z_{T_2} + \dots + z_{S_g} + z_{T_g} \leq |A_1|+|B_1| + \dots + |A_g|+|B_g|$$
$$\sum_{g=(S_1, S_2, c_1, c_2) \in H} z_{S_1} + z_{S_2} \leq \sum_{g \in H} w(g)$$
Group the sets participating in gadgets in $H$ into color classes $C_1, C_2, \dots, C_k$. Let $U_i = \bigcup\limits_{S \in C_i}S$ be the set of all vertices of color $c_i$. The corresponding constraints are then
\begin{align*}
z_{U_i} - \sum_{S \in C_i} z_S \leq 0 \quad & \forall i \in \{1,2,\dots, k\}\\
z_{cl(U_i)} - z_{U_i} \leq 0 \quad & \forall i \in \{1,2,\dots, k\}
\end{align*}
Note that $z_V = z_{cl(U_i)}$ since $U_i$ is a vertex cover. By summing these $2k$ constraints and the one obtained from building gadgets, we get $k z_V \leq \sum_{g \in H} w(g)$.
\end{proof}

\subsubsection{Examples}
We next illustrate the use of the $k$-cover via gadgets approach with a couple of examples. First, we use a 2-cover by gadgets  to re-prove a result of Blasiak et al.~\cite{BlasiakKL10} that established that the storage capacity of a cycle on $n$ vertices is $n/2$. The point of this first example is to illustrate simplicity of the new approach.  Then we give an example of an outerplanar graph for which we establish a tight bound of 14/3. In addition to serving as another example of the new approach, we think this example is particularly interesting as it demonstrates that it is sometimes necessary to consider a $k$-cover via gadgets where $k>2$ in order to establish a tight result. Note that any upper bound via a $k$-cover via gadgets is a multiple of $1/k$ and hence $k=1$ or $k=2$ would be insufficient to prove a tight bound of 14/3.

\begin{figure}
\begin{center}
\subfigure[\small{An Odd Cycle.}]{~~~~\includegraphics[scale=0.9]{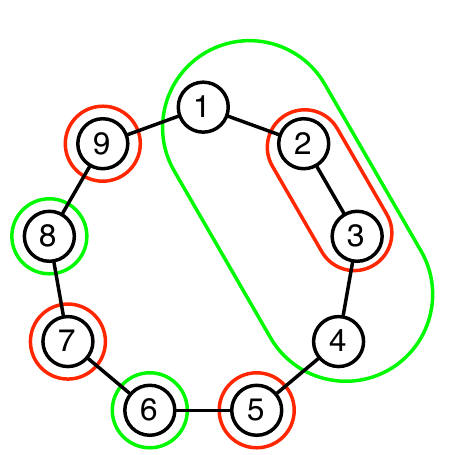}~~~~\label{fig:cycle}}
\subfigure[\small{An Outerplanar Graph.}]{~~~~\includegraphics[scale=0.7]{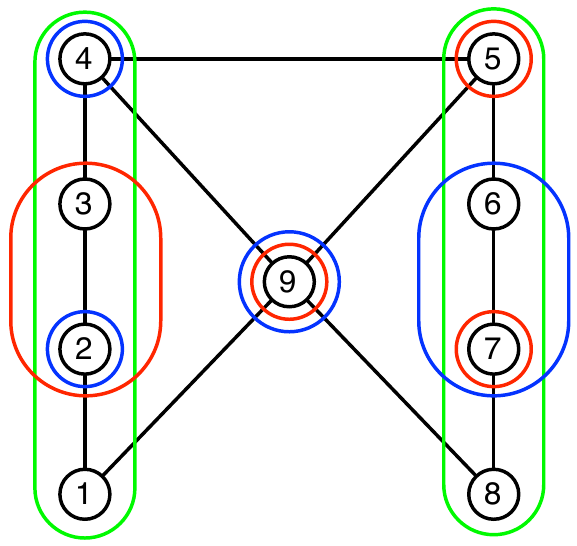}~~~~\label{fig:outer}} ~~
\caption{Two examples of $k$-cover upper bounds. See text for details.}
\end{center}
\end{figure}

\paragraph{Odd Cycles.}
We prove that the storage capacity of an odd cycle of length $n$ is $n/2$; see Figure~\ref{fig:cycle} for an example where $n=9$.  $\fm(C_n) = n/2$, thus $\scap(C_n) \geq n/2$. For the upper bound we create a gadget by taking $A = \{v_1, v_3\}$, $B = \{v_2, v_4\}$ and obtaining outside set $S_1 = \{v_1, v_2, v_3, v_4\}$ and inside set $S_2 = \{v_2, v_3\}$. On the rest of the vertices we place trivial gadgets. Color $S_1$ and trivial gadgets on $v_6, v_8, \dots, v_{n-1}$ green, color $S_2$ and trivial gadgets on $v_5, v_7, \dots, v_{n}$ red. Green and red sets are then vertex covers and the total weight of all gadgets is $n$. Thus, $\scap(C_n) \leq n/2$.

\paragraph{An Outerplanar Graph.} 
%Interesting because it requires a 3-cover.
We prove that the storage capacity of the graph in Figure~\ref{fig:outer} is $14/3$. This capacity is achieved by the fractional clique cover. Create gadget $g_1$ from $A_1 = \{v_1,v_3\}$ and $B_1 = \{v_2,v_4\}$ and another gadget $g_2$ from $A_2 = \{v_5,v_7\}$ and $B_2 = \{v_6,v_8\}$. Place one trivial gadget on each of the vertices $v_2, v_4, v_5, v_7$ and 2 trivial gadgets on $v_9$. The sets are colored as follows:
\begin{itemize}
\item Red: $v_5, v_7, v_9$ and the inside of gadget $g_1$
\item Blue: $v_2, v_4, v_9$ and the inside of gadget $g_2$
\item Green: the outside sets of both $g_1$ and $g_2$
\end{itemize}
Note that vertices of every color class form a vertex cover and the total weight of gadgets is $14$.

%\begin{center}
%\includegraphics[scale=0.5]{odd_cycle.png}
%\end{center}

%\begin{center}
%\includegraphics[scale=0.5]{outerplanar_cover.png}
%\end{center}

\subsection{$n/2$ Upper Bound via Vertex Partition}

The next theorem uses a $2$-cover by gadgets to prove that a certain family of graphs have capacity at most $n/2$.  Subsequently, we will use this theorem to exactly characterize the capacity of various graph families  of interest.

\begin{theorem} \label{partition}
Suppose that the vertices of a graph $G$ can be partitioned into sets $X$ and $Y$ such that:
\begin{enumerate}
\item $G[X]$ and $G[Y]$ are both bipartite.
\item $S_X$ is an independent set in $G[X]$ and $S_Y$ is an independent set in $G[Y]$ 
\end{enumerate}
where $S_X\subseteq X$ consists of all vertices in $X$ with a neighbor in $Y$ and 
$S_Y\subseteq Y$ consists of all vertices in $Y$ with a neighbor in $X$. 
Then $\scap(G) \leq n/2$.
\end{theorem}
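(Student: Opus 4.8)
The plan is to exhibit an explicit $2$-cover by gadgets of total weight $n$, which by the preceding theorem gives $\scap(G) \le n/2$. The two gadgets will be built from the two halves of the partition: one gadget from $X$ and one from $Y$. Since $G[X]$ is bipartite, write $X = A_1 \cup B_1$ where $A_1, B_1$ are the two color classes of a proper $2$-coloring of $G[X]$; similarly write $Y = A_2 \cup B_2$ for a proper $2$-coloring of $G[Y]$. Form gadget $g_1$ from the pair $(A_1, B_1)$ and gadget $g_2$ from the pair $(A_2, B_2)$. Then $w(g_1) = |A_1| + |B_1| = |X|$ and $w(g_2) = |A_2| + |B_2| = |Y|$, so the total weight is $|X| + |Y| = n$, exactly as needed. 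No trivial gadgets are required.

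The first key step is to understand the inside and outside sets of these gadgets. For $g_1$: the outside is $S_1^{(1)} = \cl(A_1) \cup \cl(B_1)$ and the inside is $S_2^{(1)} = \cl(A_1) \cap \cl(B_1)$. I claim $X \subseteq S_1^{(1)}$ — indeed $A_1 \subseteq \cl(A_1)$ and $B_1 \subseteq \cl(B_1)$, so $X = A_1 \cup B_1 \subseteq S_1^{(1)}$. Moreover, since $A_1$ and $B_1$ are independent sets in $G[X]$ (they are color classes), any vertex $v \in B_1$ has all its $G[X]$-neighbors in $A_1$; but $v$ may also have neighbors in $Y$, in which case $v \notin \cl(A_1)$. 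So the relevant claim is: a vertex $v \in X$ lies in $\cl(A_1)$ iff $N(v) \subseteq A_1$, which forces $v \in B_1$ (as $A_1$ is independent) and $v$ has no neighbor in $Y$, i.e. $v \notin S_X$. Symmetrically for $\cl(B_1)$. The analogous statements hold for $g_2$ with $X$ and $Y$ swapped.

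The second key step is the coloring. Assign color~$1$ (red) to the outside of $g_1$ and the inside of $g_2$; assign color~$2$ (blue) to the inside of $g_1$ and the outside of $g_2$. I must check that each color class is a vertex cover of $G$. Consider an edge $(u,w)$. If both endpoints lie in $X$, then one of them is in $B_1$ and one in $A_1$, but more simply both are in $X \subseteq S_1^{(1)} \subseteq$ (red class), so the edge is covered by red. Symmetrically, edges inside $Y$ are covered by blue (since $Y \subseteq S_1^{(2)}$, the outside of $g_2$). The remaining case is an edge $(u,w)$ with $u \in X$, $w \in Y$; then $u \in S_X$ and $w \in S_Y$. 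For the red class to cover this edge I need $u \in S_1^{(1)}$ (true, since $u \in X$) — wait, that already covers it, so red covers \emph{all} edges incident to $X$. That cannot be right for a weight-$n$ bound... so the coloring must be more careful: the outside of $g_1$ should \emph{not} be all of $X$ in general, or else we've over-covered. Let me reconsider: actually $S_1^{(1)} = \cl(A_1)\cup\cl(B_1)$ can be strictly larger than $X$ (it may pull in vertices of $Y$), and that is fine; the point is simply that red $\supseteq X$ and blue $\supseteq Y$ would each be a vertex cover on their own, giving the trivial bound $n$, not $n/2$. The weight bookkeeping already gave total weight $n$ and the gadget theorem divides by $k=2$, so $n/2$ \emph{is} the conclusion — the two "halves" $w(g_1), w(g_2)$ are what get averaged. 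So in fact the construction above works: red class $\supseteq X$ is a vertex cover for the $X$-internal and cross edges, blue class $\supseteq Y$ covers $Y$-internal and cross edges, every edge is covered by \emph{both} colors, and total gadget weight is $n$.

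The place where hypotheses~1 and~2 are truly needed — and the main obstacle to watch — is verifying that $g_1$ is a \emph{valid} gadget in the sense of the definition, i.e. that $S_1^{(1)} \supseteq S_2^{(1)}$ (automatic) and that the weight is genuinely $|A_1|+|B_1|$ and not double-counting; and, more subtly, ensuring the inside sets do not accidentally fail to be vertex covers where we need the \emph{outside} sets. Let me re-examine the coloring I actually want: red gets $\{$outside of $g_1\}\cup\{$inside of $g_2\}$, blue gets $\{$inside of $g_1\}\cup\{$outside of $g_2\}$. Red is a vertex cover: $X$-internal edges are covered since $X\subseteq$ outside of $g_1$; $Y$-internal edges $(u,w)$ — here I need the \emph{inside} of $g_2$, namely $\cl(A_2)\cap\cl(B_2)$, to contain $u$ or $w$; since $u\in A_2, w\in B_2$ (say), I need $u\in\cl(B_2)$ and $w\in\cl(A_2)$, i.e. $N(u)\subseteq B_2$ and $N(w)\subseteq A_2$. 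This is where bipartiteness of $G[Y]$ (neighbors of $A_2$-vertices are in $B_2$ \emph{within $Y$}) plus hypothesis~2 ($S_Y$ independent, so a vertex of $Y$ with a cross-neighbor has no $Y$-neighbor issues) must combine to show at least one endpoint of every $Y$-edge lands in the inside of $g_2$. Concretely: if $u\notin$ inside of $g_2$ then $u\notin\cl(B_2)$, so $u$ has a neighbor outside $B_2$; since $G[Y]$-neighbors of $u\in A_2$ are in $B_2$, that stray neighbor is in $X$, so $u\in S_Y$; but then — and this is the crux — because $S_Y$ is independent, $w\notin S_Y$, so $w$ has no neighbor in $X$, and $w$'s neighbors are all in $A_2$ (bipartite), giving $w\in\cl(A_2)\subseteq$ inside of $g_2$. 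Hence every $Y$-edge is covered by red. Cross edges $(u\in X, w\in Y)$: $u\in X\subseteq$ outside of $g_1\subseteq$ red. So red is a vertex cover; blue is a vertex cover by the symmetric argument with the roles of $X,Y$ and $g_1,g_2$ interchanged. Total weight $|X|+|Y|=n$, so by the $k$-cover theorem with $k=2$, $\scap(G)\le n/2$. I expect the bookkeeping in this last paragraph — pinning down exactly when a $Y$-vertex fails to be in $\cl(A_2)$ or $\cl(B_2)$ and invoking independence of $S_Y$ at the right moment — to be the only real content; everything else is definitional.
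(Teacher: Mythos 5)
Your construction and coloring coincide with the paper's proof (up to swapping the color labels), and the verification that each color class is a vertex cover is essentially the paper's implicit argument made explicit. One small slip: you write ``$w\in\cl(A_2)\subseteq$ inside of $g_2$,'' but $\cl(A_2)$ is \emph{not} contained in the inside $\cl(A_2)\cap\cl(B_2)$ in general; the correct step is that $w\in\cl(A_2)$ together with $w\in B_2\subseteq\cl(B_2)$ gives $w\in\cl(A_2)\cap\cl(B_2)$, which you have the facts for anyway.
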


%\begin{theorem} \label{partition}
%Suppose that the vertices of a graph $G$ can be partitioned into sets $X$ and $Y$ such that:
%
%Partition the vertices of graph $G$ into sets $X$ and $Y$ such that all . Let $S_X \subseteq X$ be the set of vertices such that $u \in S_X$ iff there exists an edge $uv$ with $v \in Y$. Similarly, define $S_Y$. If the following two conditions hold, then $\scap(G) \leq n/2$.
%\begin{enumerate}
%\item $G[X]$ and $G[Y]$ are both bipartite.
%\item $S_X$ is an independent set in $G[X]$ and $S_Y$ is an independent set in $G[Y]$.
%\end{enumerate}
%\end{theorem}
 \begin{proof}
%Without loss of generality we may assume that any vertex $u$ of degree 1 is in the same set as its neighbor; this moving $u$ from $X$ to $Y$ or vice versa will not violate either condition. 
We prove this theorem by showing that $G$ has a 2-cover by gadgets of total weight $n$. Let $(A,B)$ be a bipartition of the vertices of $G[X]$. Create the sets of gadget $g_X$ from $A$ and $B$. Note that the vertices in $X$ that are in the outside set of the gadget but not in the inside set, are exactly the vertices in $S_X$. This follows because for $v\in X$, $v\in \cl(A)\cap \cl(B)$ iff all of $v$'s neighbors are in $X$.
Similarly, create $g_Y$. Color the inside of $g_X$ and the outside of $g_Y$ red. Color the outside of $g_X$ and the inside of $g_Y$ blue. Observe that both color classes are vertex covers and the total weight of the 2 gadgets is $|X|+|Y| = n$.
%\begin{center}
%\includegraphics[scale=0.5]{partition.png}
%\end{center}
 \end{proof}
 
 \subsubsection{Cartesian Product of a Cycle and a Bipartite Graph}
We now illustrate an application of Theorem  \ref{partition}. The Cartesian product of graphs $G_1 = (V_1, E_1)$ and $G_2 = (V_2, E_2)$ is denoted by $G_1 \square G_2$ and defined as follows:
\begin{itemize}
\item The vertex set is the Cartesian set product $V_1 \times V_2$
\item $(u,u')(v,v')$ is an edge iff $u=v$ and $u'v' \in E_2$ or $u'=v'$ and $uv \in E_1$
\end{itemize}

We next use Theorem  \ref{partition} to show that any Cartesian Product of a cycle and a bipartite graph has storage capacity exactly $n/2$. An example of such a graph is given in Fig.~\ref{fig:prism} where the bipartite graph considered is just a length 3 path.
%\begin{figure}
%\begin{center}
%\subfigure[\small{A Cartesian Product Graph.}]{\includegraphics[scale=0.5]{product.png}\label{fig:product}} ~~
%\subfigure[\small{$X=S_X$.}]{\includegraphics[scale=0.2]{prism_partition.png}\label{fig:prism}}
%\subfigure[\small{$G[Y]$.}]{\includegraphics[scale=0.2]{prism_partition.png}\label{fig:prism}}
%\subfigure[\small{$S_Y$.}]{\includegraphics[scale=0.2]{prism_partition.png}\label{fig:prism}}
%\caption{Example of Cartesian Product Graph and Storage Capacity Proof. See text for details.}\label{fig:cartesian}
%\end{center}
%\end{figure}

\begin{figure}
\begin{center}
\subfigure[\small{Graph $G$. Shaded vertices are $X=S_X$.}]{~~~~~~~~~~~\includegraphics[scale=0.8]{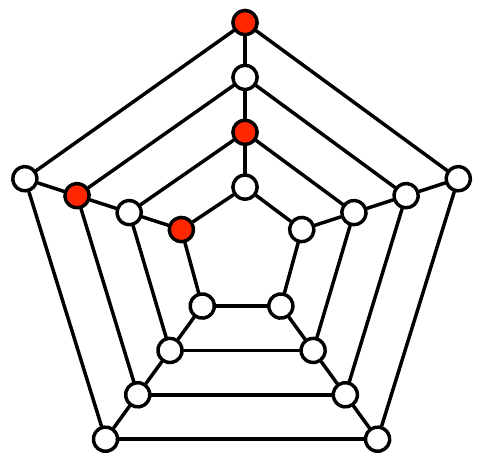}~~~~~~~~~~~\label{fig:prism1}} ~~~~~
\subfigure[\small{Graph $G[Y]$. Shaded vertices are $S_Y$.}]{~~~~~~~~~~~\includegraphics[scale=0.8]{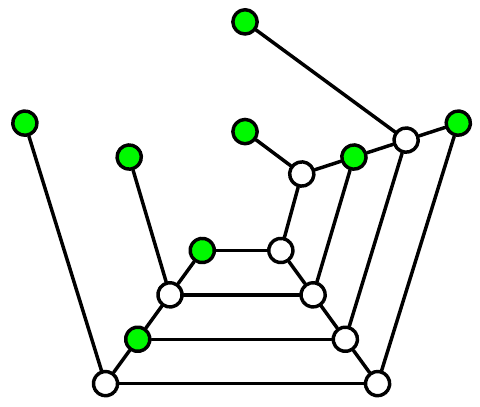}~~~~~~~~~~~\label{fig:prism2}}
\caption{Example of storage capacity proof for a cartesian product graph $G$ formed from a 5-cycle and a length 3 path. See text for details.}\label{fig:prism}
\end{center}
\end{figure}

%
%\begin{center}
%\includegraphics[scale=0.5]{product.png}
%\end{center}

 \begin{theorem}
Let $C_k$ be a cycle with $k>3$, $B$ a bipartite graph, and $G = C_k \square B$. Then $\scap(G) = n/2$, where $n$ is the number of vertices in $G$.
 \end{theorem}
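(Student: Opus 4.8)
The plan is to apply Theorem~\ref{partition} to $G = C_k \square B$, so the task reduces to exhibiting a suitable vertex partition $X \cup Y$. Since the lower bound $\scap(G) \geq n/2$ follows immediately from $\fm(G) = n/2$ (the graph is $d$-regular for appropriate $d$ and has a perfect fractional matching, or more simply one checks $\fm(G) \geq n/2$ directly; in fact $\scap(G) \geq \fcc(G) \geq \fm(G)$), the whole content is the matching upper bound $\scap(G) \leq n/2$.

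Write $C_k$ with vertex set $\{0, 1, \ldots, k-1\}$ and let $B = (B_0 \cup B_1, E_B)$ be the bipartition of $B$. The vertices of $G$ are pairs $(i, b)$ with $i \in C_k$ and $b \in B$. The idea is to use the cycle coordinate to split $G$ into two ``arcs.'' Choose a partition of the cycle's vertex set into two contiguous arcs $P$ and $Q$ (e.g.\ $P = \{0, 1, \ldots, \lceil k/2\rceil - 1\}$ and $Q$ the rest); set $X = P \times B$ and $Y = Q \times B$. First I would verify condition~(1): $G[X]$ is the Cartesian product of a path (the arc $P$) with the bipartite graph $B$, and a Cartesian product of two bipartite graphs is bipartite — concretely, $(i,b) \mapsto (i + \beta(b)) \bmod 2$ is a proper $2$-coloring, where $\beta(b) \in \{0,1\}$ records which side of $B$ contains $b$. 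Same for $G[Y]$.

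The key step is condition~(2). The set $S_X$ consists of all $(i,b) \in X$ having a neighbor in $Y$. An edge of $G$ either changes the cycle coordinate (a $C_k$-edge) or changes the $B$-coordinate (a $B$-edge); a $B$-edge keeps $i$ fixed, so it never crosses between $X$ and $Y$. Hence $(i,b)$ has a neighbor in $Y$ iff $i$ is one of the two ``boundary'' vertices of the arc $P$ — the two endpoints where $P$ meets $Q$ on the cycle — call them $i_0$ and $i_1$. So $S_X = \{i_0, i_1\} \times B$. For this to be independent in $G[X]$ I need: no $C_k$-edge inside (true, since $i_0$ and $i_1$ are not adjacent in $C_k$ as long as the arc $P$ has length at least $2$, which holds because $k > 3$ forces both arcs to have $\geq 2$ vertices when split evenly), and no $B$-edge inside, i.e.\ $\{i_0\} \times B$ and $\{i_1\} \times B$ must each be independent. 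But $\{i_0\} \times B$ is a copy of $B$ itself, which is \emph{not} independent unless $B$ has no edges. This is the main obstacle, and it means a naive ``half-and-half'' arc split does not work.

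The fix, which I expect to be the crux of the argument, is to choose the arcs so that each boundary column $\{i_j\} \times B$ lands on a \emph{single side} of the bipartition of $B$ — impossible with two columns, so instead one should interleave the cycle structure with the bipartition. Concretely, since $k > 3$ I can pick the two arcs to have boundary vertices $i_0, i_1$ with the extra property that I only place in $S_X$ the sub-columns $\{i_0\} \times B_0$ and $\{i_1\} \times B_0$ and arrange (by shifting which vertices go into $X$ versus $Y$ at the boundary, using the parity freedom afforded by $k>3$) that the boundary is ``staggered'' between the two sides of $B$; then each relevant set is $\{i_j\} \times B_\ell$, a copy of one side of $B$, which \emph{is} independent. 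So the real partition is not purely along the cycle: at the two seams one routes the $B_0$-vertices into one part and the $B_1$-vertices into the other. I would make this precise by defining $X = \{(i,b) : i \in P' \}$ where $P'$ depends on $b$ through $\beta(b)$ — essentially $P$ for one side of $B$ and a cyclic shift of $P$ by one for the other side — and then re-checking (1) and (2). Once the partition is pinned down, conditions (1) and (2) are routine verifications of the kind sketched above, and Theorem~\ref{partition} delivers $\scap(G) \leq n/2$, completing the proof. I would illustrate the construction on the $C_5 \square P_3$ example of Figure~\ref{fig:prism} to make the staggering concrete.
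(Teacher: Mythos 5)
Your high-level plan (reduce to Theorem~\ref{partition} via a cycle-arc split, then fix the failure of a naive split by ``staggering'' the boundary using the bipartition of $B$) is the right one and is essentially the idea in the paper. However, there is a genuine gap: you never pin down the staggered partition, and the default choice you set up — arcs of length about $k/2$ — does \emph{not} work for $k=5$, the smallest odd case the theorem must cover. Concretely, write $B_0\cup B_1$ for the bipartition of $B$ and take $P_0=\{0,\dots,p-1\}$, $P_1=\{1,\dots,p\}$, so $X=(P_0\times B_0)\cup(P_1\times B_1)$. Working through condition (2), $S_Y$ contains the four sets $\{p\}\times B_0$, $\{k-1\}\times B_0$, $\{p+1\}\times B_1$, $\{0\}\times B_1$. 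When $p=k-2$ the columns $k-1$ and $p+1$ coincide, putting $B_0$-vertices and $B_1$-vertices into the same column of $S_Y$ and producing $B$-edges (so $S_Y$ is not independent); likewise $p=2$ breaks $S_X$, since $\{p-1\}\times B_0$ and $\{1\}\times B_1$ then share column $1$. For $k=5$ this rules out $p=2$ and $p=3$, i.e.\ every ``roughly half'' choice; the only values that work are the extreme ones $p=1$ and $p=k-1$. So ``using the parity freedom afforded by $k>3$'' does not, on its own, yield a valid partition: the arc length has to be chosen carefully, and ``routine verification'' would have caught that the half-and-half version fails.

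The paper avoids all of this by taking the arc to have length $1$. In your notation, it sets $X=(\{1\}\times B_0)\cup(\{2\}\times B_1)$ (two adjacent half-columns, staggered across the bipartition of $B$) and $Y=V\setminus X$. This is the degenerate $p=1$ instance of your staggering, and it is clean: $G[X]$ has no edges at all (so it is trivially bipartite and $S_X=X$ is trivially independent), $G[Y]$ is a path-times-$B$ with two pendant half-columns attached (hence bipartite, using $k>3$ so the cycle is genuinely broken), and $S_Y$ sits in four distinct, pairwise non-adjacent columns $k,1,2,3$ with alternating sides of $B$, hence is independent. The paper also disposes of even $k$ up front — $C_k\square B$ is then bipartite, so $\vc(G)=\mm(G)=n/2$ directly — which you should do as well rather than run Theorem~\ref{partition} in that case. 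So the fix for your argument is simple: replace the half-and-half arcs with arcs of length one (or, more generally, note and prove the constraints $p\in\{1\}\cup\{3,\dots,k-3\}\cup\{k-1\}$ on the arc length), but as written the proposal does not yet constitute a proof.
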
 
 
 \begin{proof}
 If $k$ is even, $G$ is a bipartite graph with $\mm(G) = \vc(G) = n/2$ and hence $\scap(G)=n/2$. Assume for the rest of the proof that $k$ is odd.  
 
 To show that $\scap(G)\geq n/2$, consider the fractional matching where we assign weight $1/2$ to all edges of the form $(u,a)(v,a)$, i.e., edges that come from the cycle. Hence  $\scap(G)\geq n/2$.
 
To show that $\scap(G)\leq n/2$ we proceed as follows. Consider the subgraph $G_i$ induced by vertices $(u_i, v_1)$, $(u_i, v_2)$, $(u_i, v_3)$, etc., which is isomorphic to $B$. Fix a bipartition $(R,Q)$ of $B$ and split the vertices of each $G_i$ into $R_i$ and $Q_i$ according to that bipartition. We now show that $X = R_1 \cup Q_2$ and $Y = V \setminus X$ satisfy the conditions of Theorem \ref{partition}. $G[X]$ has no edges and therefore is bipartite. $G[Y]$ is bipartite because it consists of $P_{k-3} \square B$ which is bipartite (where $P_{k-3}$ is a path of length $k-3$ obtained by deleting edges $u_k u_1$, $u_1 u_2$, and $u_2 u_3$ from the cycle), edges between $R_k$ and $R_1$, and edges between $Q_2$ and $Q_3$ which do not complete any cycles. $S_X = X$ is an independent set. $S_Y = R_k \cup Q_1 \cup R_2 \cup Q_3$ is also an independent set.
 \end{proof}
 
%\begin{center}
%\includegraphics[scale=0.5]{prism_partition.png}
%\end{center}

 \subsubsection{Cycles With Chords That Are Not Too Close Together}
 
 We next apply Theorem \ref{partition} to prove that a family of graphs related to outerplanar graphs also has storage capacity $n/2$. Recall that any (connected) outerplanar graph without cut vertices is a cycle with non-overlapping chords. The family of graphs we consider is more general in the sense that we permit the chords to overlap but more restrictive in the sense that we require the endpoints of these chords to be at least a distance 4 apart on the cycle. A natural open question is to characterize $\scap(G)$ for all outerplanar graphs. All that was previously known is that if we assume each $X_i$ is a linear combination of $\{X_j\}_{j\in N(i)}$, then $\scap(G)$ equals  \emph{integral} clique packing \cite{berliner2011index}.
 
  \begin{theorem}
Let $G$ be a cycle with a number of chords such that endpoints of chords are at least distance 4 apart on the cycle. Then $\scap(G) = n/2$.
 \end{theorem}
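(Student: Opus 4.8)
The plan is to apply Theorem~\ref{partition} by exhibiting a suitable partition of the vertices into $X$ and $Y$. The lower bound $\scap(G)\geq n/2$ is immediate: assigning weight $1/2$ to each edge of the underlying cycle $C_n$ is a fractional matching of weight $n/2$, so $\fm(G)\geq n/2$ and hence $\scap(G)\geq \fcc(G)\geq \fm(G)\geq n/2$ by Lemma~\ref{lem:scaplb}. For the upper bound, the key observation is that the chords are ``spread out'': since any two chord endpoints are at distance at least $4$ along the cycle, the chords partition the cycle into arcs of length at least $4$, and the interior of each such arc is untouched by chords.

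\textbf{Constructing the partition.} First I would label the cycle vertices $v_1,\dots,v_n$ in cyclic order and let $W\subseteq V$ be the set of chord endpoints. I want to put every chord entirely on one side of the partition so that $G[X]$ and $G[Y]$ are forced to be bipartite, and I want the cross vertices $S_X, S_Y$ to be independent sets. The natural choice: walk around the cycle and, in each maximal arc between consecutive chord endpoints, place an alternating pattern of $X$'s and $Y$'s, but arrange the coloring so that (i) both endpoints of any chord land in the same class, and (ii) consecutive cycle vertices that are in different classes are exactly the ``boundary'' vertices, which I will argue are non-adjacent to anything in the other class except via a single cycle edge. Concretely, one can 2-color the arcs: since each arc has length $\geq 4$ (so $\geq 3$ internal vertices), there is enough room to make a consistent assignment where each chord's two endpoints receive the same color and the transitions between $X$ and $Y$ happen at internal arc vertices spaced out enough that $S_X$ and $S_Y$ are independent. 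Then $G[X]$ consists of some cycle-path-fragments plus all the chords that landed in $X$; because the chord endpoints within an arc can be given the arc's bipartition color consistently (distance $\geq 4$ gives parity freedom), no odd cycle is created, so $G[X]$ is bipartite, and likewise $G[Y]$. The cross-edge vertices $S_X$ (vertices of $X$ with a neighbor in $Y$) are only the arc-boundary vertices, which by construction form an independent set, and similarly for $S_Y$. Theorem~\ref{partition} then yields $\scap(G)\leq n/2$.

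\textbf{Main obstacle.} The delicate part will be verifying that a single global assignment of arcs to $\{X,Y\}$ can \emph{simultaneously} satisfy all three requirements: keep each $G[X], G[Y]$ bipartite (no odd closed walk using chords and cycle-fragments), keep $S_X$ and $S_Y$ independent (no two adjacent boundary vertices on the same side, and in particular no chord joining two boundary vertices of the same class), and handle the case where the total number of arcs is odd so that a naive 2-coloring of the ``arc graph'' fails. I expect the distance-$4$ hypothesis to be exactly what rescues this: it guarantees each arc has at least three internal vertices, giving room to ``absorb'' a parity defect by choosing where inside an arc to switch classes, and it ensures chord endpoints are never adjacent on the cycle, so the boundary sets stay independent. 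I would handle the odd-number-of-arcs case by a small local modification inside one chosen arc (splitting it so the color switch happens twice), using the length-$\geq 4$ slack; this is the one place a careful case analysis is needed. Everything else is a routine check against the two hypotheses of Theorem~\ref{partition}.
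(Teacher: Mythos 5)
Your overall framework (lower bound via the fractional matching of weight $1/2$ on each cycle edge, upper bound via Theorem~\ref{partition}) is the same as the paper's, but the partition you propose is genuinely different and is not carried through; the ``delicate part'' you flag is exactly where the argument is incomplete, and I do not think your scheme can be made to work without a fundamentally different idea.

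Your plan distributes the vertices of each arc between $X$ and $Y$ and tries to keep each chord's two endpoints in the same class, so both $G[X]$ and $G[Y]$ end up containing cycle fragments \emph{and} chords. Bipartiteness of those subgraphs is then a global parity condition: a chord $c_ic_j$ together with a same-class cycle path from $c_i$ to $c_j$ gives a cycle whose parity must be even, and you have to satisfy this for every chord simultaneously while also keeping the boundary vertices $S_X$, $S_Y$ independent and handling an odd number of arcs. You assert that the length-$\geq 4$ hypothesis gives enough slack, but you never exhibit an assignment rule or verify that all these constraints can be met at once; ``everything else is a routine check'' is not justified, since bipartiteness of chord-plus-fragment subgraphs is precisely the hard content.

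The paper's construction sidesteps all of this by making $X$ as small as possible: $X$ is just the middle vertex of each arc between consecutive chord endpoints (for an odd-length arc, pick either middle vertex), and $Y=V\setminus X$. Then $X=S_X$ is independent (middles of distinct arcs are at distance $\geq 4$ because each arc has length $\geq 4$), so $G[X]$ is trivially bipartite; and $G[Y]$ is a \emph{forest} — deleting one middle vertex per arc breaks the cycle into $k$ paths, each containing exactly one chord endpoint, and the chords form a matching on those endpoints (the distance-$4$ hypothesis forces each vertex to be the endpoint of at most one chord), so no cycle can be created. The set $S_Y$ consists of the two cycle-neighbours of each middle vertex, which are pairwise non-adjacent again by the distance hypothesis. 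No chords ever land inside $G[X]$, so the bipartiteness question you were wrestling with never arises. I'd recommend replacing your arc-colouring scheme with this ``pick one middle vertex per arc'' choice; with it, the verification of the hypotheses of Theorem~\ref{partition} really is routine.
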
 
 
 \begin{proof}
To show that $\scap(G)\geq n/2$, consider the fractional matching where we place weight $1/2$ on every edge of the cycle.
To show that $\scap(G)\leq n/2$ we proceed as follows. Label the vertices that are endpoints of chords $c_1, c_2, \dots, c_k$ in the order they appear on the cycle. For every path between $c_i$ and $c_{i+1}$ (and between $c_k$ and $c_1$) pick the middle vertex of the path to be included in $X$. If the path is of odd length, pick either of the 2 middle vertices. We now show that $X$ and $Y=V \setminus X$ satisfy the conditions of Theorem \ref{partition}. $X = S_X$ is an independent set. $G[Y]$ is a forest and $S_Y$ is an independent set due to the assumption on the distance between chord endpoints.
 \end{proof}

% \begin{center}
%\includegraphics[scale=0.5]{chords.png}
%\end{center}

\begin{figure}
\begin{center}
\subfigure[\small{Graph $G$. Shaded vertices are $X=S_X$.}]{~~~~~~~~\includegraphics[scale=0.75]{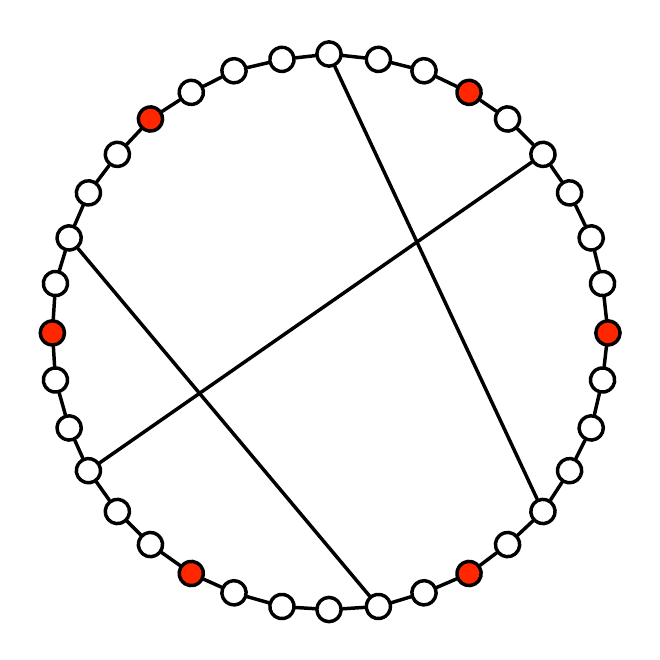}~~~~~~~~\label{fig:chord1}} ~~
\subfigure[\small{Graph $G[Y]$. Shaded vertices are $S_Y$.}]{~~~~~~~~\includegraphics[scale=0.75]{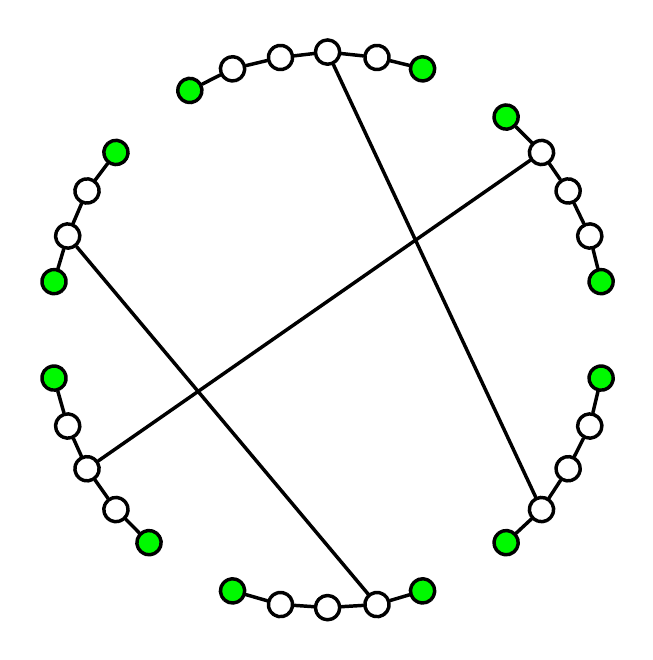}~~~~~~~~\label{fig:chord2}}
\caption{Example of a storage capacity proof for cycles with chords. See text for details.}\label{fig:chords}
\end{center}
\end{figure}

\section{Recovery from partial node failure}
\label{sec:partial}
In this section, we extend the notion of storage capacity to cover for partial failures. This is a new generalization, that, as far as we understand,
does not have a counterpart in index coding.
As before, suppose we have a graph $G(V=[n],E)$ on $n$ vertices. We assume here that   vertex $i \in [n]$ stores $X_i \in \ff_q^m$,  a $q$-ary random vector of length $m$. 
We want the following repair criterion to be satisfied:
if up to any $\delta, 0 \le \delta \le 1,$ proportion of the $m$ coordinates of $X_i, i \in [n]$ are erased, they can be recovered by using the 
remaining content of the vertex $i$ and $X_{N(i)}$, the contents in the neighbors of the vertex.

We call the normalized asymptotic maximum total amount of information (in terms of $q$-ary unit)  
$$\lim_{m\to \infty}  \frac{H(X_1, X_2, \dots, X_n)}m,$$ 
that can be stored in the graph $G$, to be the {\em partial recovery capacity} of $G$.  This is denoted by
$\scap_q(G,\delta)$.

We have the following simple facts. Recall, $\scap^{(q)}(G) \equiv \sup_{m\in \integers_+} \scap_{q^m}(G)$.
\begin{proposition}
For a graph $G$,
$\scap_q(G,0) = n$ and $\scap_q(G,1) = \scap^{(q)}(G)$.
\end{proposition}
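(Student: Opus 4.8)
The plan is to unwind the definition of $\scap_q(G,\delta)$ at the two extreme values of $\delta$ and match each against the corresponding standard notion. For $\delta = 0$, the repair criterion is vacuous: no coordinates of any $X_i$ are erased, so the constraint ``recover the erased coordinates from the surviving content of $i$ together with $X_{N(i)}$'' imposes nothing at all. Hence the only requirement is that each $X_i \in \ff_q^m$, and we may take all $X_i$ independent and uniform, giving $H(X_1, \dots, X_n) = nm$ and $\scap_q(G,0) = \lim_{m\to\infty} nm/m = n$. (An upper bound of $n$ is immediate since $H(X_1,\dots,X_n) \le nm$ always, as each $X_i$ is an $m$-dimensional $q$-ary vector.)

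For $\delta = 1$, the erasure pattern allowed includes erasing \emph{all} $m$ coordinates of $X_i$, so the repair criterion becomes: $X_i$ must be recoverable from $X_{N(i)}$ alone, i.e.\ $H(X_i \mid X_{N(i)}) = 0$ for every $i$. This is exactly the defining constraint of the storage capacity applied to the vector-valued storage scheme where each vertex holds a symbol from the alphabet $\ff_q^m$ of size $q^m$. First I would observe that the normalization $H(X_1,\dots,X_n)/m$ therefore equals $\scap_{q^m}(G)$ in $q$-ary units (since one $q^m$-ary unit equals $m$ $q$-ary units), so
\[
\scap_q(G,1) = \lim_{m\to\infty} \frac{\scap_{q^m}(G)\cdot m}{m} = \lim_{m\to\infty}\scap_{q^m}(G).
\]
Then I would invoke the superadditivity $(m+r)\scap_{q^{m+r}}(G) \ge m\scap_{q^m}(G) + r\scap_{q^r}(G)$ established in the excerpt, which by Fekete's lemma gives $\lim_{m\to\infty}\scap_{q^m}(G) = \sup_{m}\scap_{q^m}(G) = \scap^{(q)}(G)$, completing the identification.

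There is no real obstacle here — both claims are essentially definitional once one is careful about two bookkeeping points: (i) that the partial-recovery criterion at $\delta=1$ really does reduce to the plain neighborhood-recovery constraint, rather than being strictly weaker or stronger, and (ii) that the unit conversion between $q$-ary and $q^m$-ary entropy is handled consistently so that the limit in the definition of $\scap_q(G,\delta)$ lines up with the $\sup/\lim$ in the definition of $\scap^{(q)}(G)$. The mildest subtlety is checking that the $\delta=1$ definition uses $\lim$ while $\scap^{(q)}$ uses $\sup$; Fekete's lemma (already cited in the excerpt for exactly this sequence) reconciles them, so this step is routine.
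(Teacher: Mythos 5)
Your proof is correct and follows essentially the same route as the paper's: the paper dismisses the $\delta=0$ case as evident (for exactly the reason you spell out), and for $\delta=1$ it likewise writes $\scap_q(G,1) = \lim_{m\to\infty}\scap_{q^m}(G) = \sup_m \scap_{q^m}(G) = \scap^{(q)}(G)$, citing Fekete's lemma to pass from the limit to the supremum. You supply more explicit bookkeeping on the $q$-ary versus $q^m$-ary unit conversion, but the argument is the same.
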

\begin{proof}
The first statement is  quite evident. For the second, note that,
$$
\scap_q(G,1) = \lim_{m\to \infty} \scap_{q^m}(G) =  \sup_m \scap_{q^m}(G) = \scap^{(q)}(G),
$$
where we could use the $\lim$ and the $\sup$ interchangeably because of Fekete's lemma, as discussed in the introduction.
\end{proof}

In the remaining parts of this section,   we will provide tight upper and lower bound on 
the quantity $\scap_q(G,\delta)$.

\subsection{Impossibility bound}

Note that, the partial recovery capacity can be defined in terms of an entropy maximization problem, generalizing the 
storage capacity.
\begin{theorem}\label{thm:partial}
Let $H(X)$ be the entropy of $X$ measured in $q$-ary units. Suppose,  $X_i \in \ff_q^m$, $i \in [n]$.
For a graph $G([n], E)$, $\scap_q(G,\delta)$ is upper bounded by the solution of the following optimization problem.
\begin{align}
\max \lim_{m \to \infty} \frac{H(X_1, \dots, X_n)}{m},
\end{align}
such that,
$$
H(X_i \mid X_{N(i)}) \le \log_q A_q(m,\delta m+1),
$$
where $A_q(m,d)$ is the maximum possible size of a $q$-ary $m$-length error-correcting code with minimum distance $d$.
\end{theorem}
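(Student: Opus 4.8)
The plan is to establish the bound $H(X_i \mid X_{N(i)}) \le \log_q A_q(m, \delta m + 1)$ for every admissible storage scheme, which immediately implies that the partial recovery capacity is at most the optimum of the stated program. The key observation is that partial recovery imposes an error-correcting structure on the conditional distribution of $X_i$ given the contents of its neighborhood. First I would fix the value of the neighborhood $X_{N(i)} = x$ and consider the set $\mathcal{C}_x = \{ x_i \in \ff_q^m : x_i \text{ is a possible value of } X_i \text{ when } X_{N(i)} = x \}$, i.e., the support of the conditional distribution $X_i \mid X_{N(i)} = x$.

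Next I would argue that $\mathcal{C}_x$ must be a code of minimum distance at least $\delta m + 1$. Suppose two distinct codewords $y, y' \in \mathcal{C}_x$ differ in at most $\delta m$ coordinates. Then an adversary could erase exactly the set of coordinates on which they differ (a set of size at most $\delta m \le \delta m$), leaving the two candidates $y$ and $y'$ agreeing on all surviving coordinates of vertex $i$ and sharing the same neighborhood contents $x$; hence the repair procedure cannot distinguish $y$ from $y'$, contradicting the recovery requirement. Therefore $|\mathcal{C}_x| \le A_q(m, \delta m + 1)$ for every $x$, and since $H(X_i \mid X_{N(i)} = x) \le \log_q |\mathcal{C}_x| \le \log_q A_q(m, \delta m + 1)$, averaging over $x$ gives $H(X_i \mid X_{N(i)}) \le \log_q A_q(m, \delta m + 1)$.

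Finally, I would assemble the pieces: any valid storage scheme for $(G,\delta)$ gives a feasible point of the optimization problem (the constraints $H(X_i \mid X_{N(i)}) \le \log_q A_q(m, \delta m+1)$ hold for all $i$ by the argument above, and the objective $\lim_{m\to\infty} H(X_1,\dots,X_n)/m$ is exactly the normalized information stored), so the capacity $\scap_q(G,\delta)$ cannot exceed the program's optimum. I expect the main subtlety to lie in handling the erasure model carefully — in particular making sure the ``up to $\delta$ proportion'' quantifier is matched correctly to the minimum-distance threshold $\delta m + 1$ (an erasure pattern of size $d-1$ is correctable iff the code has distance $\ge d$), and in confirming that the supremum/limit over $m$ interacts correctly with the per-$m$ constraints so that the asymptotic statement follows from the finite-$m$ bounds; the coding-theoretic core of the argument is otherwise straightforward.
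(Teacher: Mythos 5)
Your proposal is correct and takes essentially the same route as the paper: condition on the neighborhood contents, observe that the support of $X_i$ must form a code of minimum distance at least $\delta m + 1$ (else an adversarial erasure of the differing coordinates defeats repair), bound the support size by $A_q(m,\delta m+1)$, and average over the conditioning. Your treatment is slightly more explicit than the paper's in spelling out the averaging step from $H(X_i \mid X_{N(i)}=x)$ to $H(X_i \mid X_{N(i)})$, but the key idea and structure coincide.
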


\begin{proof}
Let $X_i \in \ff_q^m$, $i \in [n]$ be the random variables that can be stored in the vertices of $G$ satisfying the repair condition. Suppose we are given the values of $X_{N(i)}$. In this situation let $M \subseteq \ff_q^m$ be the set of possible 
values of $X_i$ ($P(X_i =a) >0, \forall a \in M$). Let $X_i^{1}, X_i^{2}$ be any two different elements of $M$. We claim that, 
the Hamming distance between  $X_i^{1}, X_i^{2}$ is at least $\delta m +1$, or 
$$
d(X_i^{1}, X_i^{2}) \ge \delta m +1.
$$
Suppose this is not true. Then there exist $X_i^{1}, X_i^{2} \in M$
such that $d(X_i^{1}, X_i^{2}) \le \delta m.$ Let $J \subset \{1,\dots, m\}$ be the coordinates where $X_i^{1}$ and $X_i^{2}$
differ.  Therefore, $|J| \le \delta m$. Suppose $X_i^{1}$ was stored in vertex $i$ and  the coordinates in $J$ are erased.   Now, 
there will not be any way to uniquely
identify $X_i$: it can be either of  $X_i^{1}$ or $X_i^{2}$. Hence the repair condition will not be satisfied which is a contradiction. 

Therefore, $M \subseteq \ff_q^m$ is a set of vectors such that any two elements of $M$ is Hamming distance at least $\delta m +1$ apart.
Hence $M$ is an error-correcting code with minimum distance $\delta m +1$. And therefore, $|M| \le A_q(m, d)$. This implies,
$
H(X_i \mid X_{N(i)}) \le \log_q A_q(m,\delta m+1),
$ 
which proves the theorem.
\end{proof}

Let us define $$R_q(\delta) \equiv \lim_{m \to \infty} \frac{\log_q A_q(m,\delta m +1)}{m},$$ assuming the limit exists.
\begin{corollary}
 We must have, for any graph $G$,  $\scap_q(G,\delta) = \scap^{(q)}(G)$ for $\delta \ge 1-\frac1q$. In particular, $\scap_2(G,\delta) = \scap^{(2)}(G)$ for $\delta \ge \frac12.$
\end{corollary}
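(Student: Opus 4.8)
The plan is to derive the corollary directly from Theorem~\ref{thm:partial} together with two classical facts from coding theory: the Singleton-type saturation of $A_q(m,d)$ and the Plotkin bound. The key observation is that when the relative distance $\delta$ is at least $1-1/q$, any $q$-ary code of length $m$ with minimum distance $\delta m + 1$ has only a constant number of codewords, so the entropy bound $H(X_i \mid X_{N(i)}) \le \log_q A_q(m, \delta m + 1)$ becomes $o(m)$, which in the $m \to \infty$ normalized limit is the same as requiring $H(X_i \mid X_{N(i)}) = 0$.

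First I would recall the Plotkin bound: if $d > (1-1/q)m$, then $A_q(m,d) \le \frac{d}{d - (1-1/q)m}$. Setting $d = \delta m + 1$ with $\delta \ge 1 - 1/q$ gives $d - (1-1/q)m \ge 1$, hence $A_q(m, \delta m + 1) \le \delta m + 1 = O(m)$. Therefore $\log_q A_q(m, \delta m + 1) = O(\log m)$, and dividing by $m$ and letting $m \to \infty$, the constraint in Theorem~\ref{thm:partial} becomes exactly $\lim_{m\to\infty} H(X_i \mid X_{N(i)})/m = 0$ for all $i$. The optimization problem in Theorem~\ref{thm:partial} then reduces to maximizing $\lim_{m\to\infty} H(X_1,\dots,X_n)/m$ subject to $H(X_i \mid X_{N(i)}) = o(m)$; this is precisely the scaled definition of $\scap^{(q)}(G)$ (one takes $X_i \in \ff_q^m$, i.e., works over the alphabet $q^m$, and the normalized entropy is $\scap_{q^m}(G)/m$ up to the vanishing slack). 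So the upper bound is $\scap^{(q)}(G)$. Matching this with the trivial lower bound $\scap_q(G,\delta) \ge \scap^{(q)}(G)$ — which holds because any storage code achieving $\scap^{(q)}(G)$ already satisfies $H(X_i \mid X_{N(i)}) = 0$, a fortiori the (vacuous) partial-recovery condition when $\delta$ is large — gives equality. The binary case $q = 2$, $\delta \ge 1/2$ is the specialization $1 - 1/q = 1/2$.

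The main obstacle, and the only place requiring care, is justifying the interchange of the $m \to \infty$ limit with the optimization, i.e., that the reduced problem (constraint $H(X_i\mid X_{N(i)}) = o(m)$) genuinely has the same optimum as the strict problem (constraint $H(X_i\mid X_{N(i)}) = 0$). For the upper-bound direction one needs that allowing a sublinear amount of conditional entropy cannot boost the normalized total entropy by a linear amount; this follows by a standard argument — given a sequence of near-feasible solutions, one can argue via Fekete's lemma (already invoked in the introduction and in the Proposition above) and the fact that the slack $\log_q A_q(m,\delta m+1) = O(\log m)$ is additive over the $n$ vertices, hence contributes only $O(n \log m) = o(m)$ to any entropy inequality of the type used in~\eqref{eq:vc}. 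I would spell this out by noting that $H(X_V) \le H(X_{\vc(G)}) + \sum_{v \notin \vc(G)} H(X_v \mid X_{N(v)}) \le m|\vc(G)| + o(m)$ or, more sharply, by running the index-coding reduction \eqref{eq:indsc} at alphabet size $q^m$ and observing the additive error terms are lower-order. Everything else is bookkeeping.
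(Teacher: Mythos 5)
Your proof is correct and follows essentially the same route as the paper's: the entire content of the paper's one-line argument is that the Plotkin bound gives $R_q(\delta)=0$ for $\delta\ge 1-1/q$, so the constraint in Theorem~\ref{thm:partial} degenerates to $H(X_i\mid X_{N(i)})=o(m)$, which in the normalized limit is the defining constraint for $\scap^{(q)}(G)$; the matching lower bound is the monotonicity $\scap_q(G,\delta)\ge\scap_q(G,1)=\scap^{(q)}(G)$. One small wording slip: when $\delta$ is large the partial-recovery condition is \emph{strongest}, not vacuous---the lower bound holds because a code with $H(X_i\mid X_{N(i)})=0$ trivially satisfies every partial-recovery condition, not because the condition disappears.
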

 The proof of this fact follows since $R_q(\delta) =0$ for $\delta \ge 1-\frac1q$ (Plotkin bound, see~\cite[p.~127]{roth2006introduction}).
 
%Without loss of generality, assume that the vertex set of the graph $G$ is given by $V \equiv \{1,\dots,n\}$. 
%%Also, let Furthermore, for any set $I \subseteq V$, we define $X_I \equiv \{X_i: i \in I\}$.
%Our first observation can be summarized in the following lemma.
%
%\begin{lemma}
%Let $H(X)$ be the entropy of $X$ measured in $q$-ary units. Suppose,  $X_i \in \ff_q^m$ denote the content
%of vertex $i$. Then,
%$$
%H(X_i \mid X_{N(i)}) \le \log_q A_q(m,\delta m+1),
%$$
%where $A_q(m,d)$ is the maximum possible size of a $q$-ary $m$-length error-correcting code with minimum distance $d$.
%\end{lemma}

%Suppose $X_i\in \ff_q^m$ be the content of vertex $i$, where $i \in V$. The partial recovery capacity can be expressed as the following optimization problem. 
%$$
%\max \frac1m H(X_1, X_2, \dots, X_n)
%$$
%such that for all $i \in V$,
%$$
%\frac1m H(X_i \mid X_{N(i)}) \le \frac1m \log_q A_q(m,\delta m+1).
%$$

Generalizing the technique of upper bounding the storage capacity via an information theoretic linear program, we can obtain an upper bound on $\scap_q(G,\delta)$.
We define a variable $z_S$ for every $S \subseteq V$ and let $\bo(S, T) = (\cl(S) \setminus S)\cap T$ denote the boundary of the set $S$ consisting of vertices in $T$  with all neighbors in $S$.   
Our main upper bound is the following.
\begin{theorem}\label{thm:lp_partial}
Consider the LP below.
\begin{align*}
\textrm{maximize}\quad & z_V \tag{Information theoretic LP for partial failure}\\
\textrm{s.t.}\quad & z_{\emptyset} = 0\\
& z_T - z_S \leq  |T\setminus S| -  (1-R_q(\delta))\cdot |\bo(S,T)| \quad \forall S \subseteq T \\
& z_S + z_T \geq z_{S \cap T} + z_{S \cup T} \quad \forall S,T
\end{align*}
The optimal solution to the above LP is an upper bound on $\scap_q(G,\delta)$. 
\end{theorem}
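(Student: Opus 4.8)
## Proof Proposal

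The plan is to mirror the proof of the (non-partial) information theoretic LP bound, carefully tracking the point at which the conditional-entropy constraint is used and replacing it by the sharper bound furnished by Theorem~\ref{thm:partial}. Concretely, I would work with random vectors $X_i \in \ff_q^m$ satisfying the $\delta$-partial repair condition, fix the entropy unit to be $q$-ary, and for each $S \subseteq V$ set $z_S \equiv \lim_{m\to\infty} H(X_S)/m$ (passing to a subsequence if needed so the limit exists, or taking a $\limsup$ and noting it suffices for an upper bound). Then $z_\emptyset = 0$ and the submodularity constraint $z_S + z_T \geq z_{S\cap T} + z_{S\cup T}$ is immediate from submodularity of Shannon entropy, exactly as in Section~\ref{sec:infoLP}. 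The entire content is therefore in verifying the middle constraint.

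For the middle constraint, fix $S \subseteq T$. The first step is the chain-rule decomposition
\[
H(X_T) - H(X_S) = H(X_{T\setminus S}\mid X_S) \leq \sum_{v \in T\setminus S} H(X_v \mid X_S).
\]
Now split $T\setminus S$ according to whether a vertex lies in the boundary $\bo(S,T) = (\cl(S)\setminus S)\cap T$. For $v \in (T\setminus S)\setminus \bo(S,T)$, I simply bound $H(X_v\mid X_S) \leq H(X_v) \leq m$ (in $q$-ary units), contributing $|T\setminus S| - |\bo(S,T)|$ total. For $v \in \bo(S,T)$, by definition every neighbor of $v$ lies in $S$, so $X_{N(v)}$ is determined by $X_S$, and hence $H(X_v \mid X_S) \leq H(X_v \mid X_{N(v)})$. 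This is precisely where Theorem~\ref{thm:partial} applies: $H(X_v\mid X_{N(v)}) \leq \log_q A_q(m,\delta m + 1)$. Summing, dividing by $m$, and taking $m\to\infty$ gives the contribution $|\bo(S,T)|\cdot R_q(\delta)$ from the boundary vertices, so altogether
\[
z_T - z_S \leq |T\setminus S| - |\bo(S,T)| + R_q(\delta)\,|\bo(S,T)| = |T\setminus S| - (1-R_q(\delta))\,|\bo(S,T)|,
\]
which is exactly the stated constraint. Thus the tuple $(z_S)_{S\subseteq V}$ is LP-feasible, so $z_V \leq \mathrm{OPT}$ of the LP, and since $z_V = \scap_q(G,\delta)$ by definition, the theorem follows.

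I expect the only genuine subtlety — not a true obstacle, but the step deserving care — to be the interchange of limits: the constraint of Theorem~\ref{thm:partial} is stated for each fixed $m$ as $H(X_i\mid X_{N(i)}) \leq \log_q A_q(m,\delta m+1)$, and one must divide by $m$ and pass to the limit, invoking the assumed existence of $R_q(\delta) = \lim_{m\to\infty} \log_q A_q(m,\delta m+1)/m$. One should also be slightly careful that the supremum defining $\scap_q(G,\delta)$ over storage schemes commutes appropriately with the per-$m$ inequality; but since every feasible scheme satisfies the per-$m$ constraint and we only need an upper bound, it is enough to observe that for any scheme the resulting $z$-vector is feasible, and hence the optimum over schemes is at most the LP optimum. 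Everything else is a direct transcription of the argument already given for the information theoretic LP, with the conditional entropy $H(X_v\mid X_{N(v)}) = 0$ bound replaced by the distance-based code-size bound.
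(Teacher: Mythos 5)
Your proof is correct and follows essentially the same route as the paper's: set $z_S$ to be the (normalized, $q$-ary) entropy of $X_S$, get $z_\emptyset = 0$ and submodularity for free, and establish the key constraint by singling out the boundary $\bo(S,T)$ and invoking Theorem~\ref{thm:partial} to bound $H(X_v\mid X_{N(v)})$ for each boundary vertex. The only cosmetic difference is in how you organize the non-boundary part of $T\setminus S$: the paper writes $H(X_T) = H(X_S) + H(X_{\bo(S,T)}\mid X_S) + H(X_T\mid X_{\bo(S,T)},X_S)$ and bounds the last term by $|T\setminus\cl(S)|$ in one shot, while you bound $H(X_v\mid X_S)$ for each non-boundary vertex individually by $1$ and sum; both yield $|T\setminus S| - |\bo(S,T)| = |T\setminus\cl(S)|$, so this is purely bookkeeping. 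You are also more explicit than the paper about the $m\to\infty$ normalization, which is a fair point to flag but not a gap in either argument.
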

\begin{proof}
The proof follows the  same reasoning as the proof of the bound via information theoretic LP of Sec.~\ref{sec:infoLP}.
Indeed, the variable $z_S$ for every $S \subseteq V$ denote the entropy of $S$, $H(X_S)$. The last constraint on the LP above follows from sub-modularity. To establish the second constraint, first note that $|T\setminus \cl(S)| = |T\setminus S| -|\bo(S,T)|$. Now we have to show that for any $S\subseteq T$,
$$
 H(X_T) - H(X_S) \leq  |T\setminus \cl(S)| +  R_q(\delta)\cdot |\bo(S,T)|.
$$
To see this,
note that, 
\begin{align*}
H(X_T) &= H(X_S, X_{\bo(S,T)}, X_T) \\
&= H(X_S)+ H(X_{\bo(S,T)}|X_S) + H(X_T|X_{\bo(S,T)}, X_S)\\
& \le H(X_S) +  \sum_{i \in \bo(S,T)}H(X_{i}|X_S) + |T\setminus\cl(S)|\\
\Rightarrow H(X_T) - H(X_S) &\le |T\setminus\cl(S)| + \sum_{i \in \bo(S,T)}H(X_{i}|X_{N(i)})\\
& \le  |T\setminus\cl(S)| +  | \bo(S,T)| \lim_{m \to \infty} \frac1m \log_q A_q(m,\delta m+1)\\
& = |T\setminus\cl(S)| +  | \bo(S,T)| \cdot R_q(\delta),
\end{align*}
where the last two lines follow from Thm.~\ref{thm:partial} and the definition of $R_q(\delta)$.
\end{proof}

%$H(X_T) - H(X_S) = H(X_T|X_S) \le |T\setminus S| - 
%I(X_{\bo(S,T)};X_S),$ % = |T\setminus S| - \sum_{i \in \bo(S,T)}I(X_{i};X_S).$
%where,
%$I(X_{\bo(S,T)};X_S) = H(X_{\bo(S,T)}) - H(X_{\bo(S,T)}| X_S)$ is the mutual information. Now using chain rule for mutual information,
%$
%I(X_{\bo(S,T)};X_S) = \sum_{j=1}^{ |\bo(S,T)|} I(X_{i_j}; X_S | X_{i_1},X_{i_2}, \dots, X_{i_{j-1}}),
%$
%where $\bo(S,T) \equiv \{i_1, i_2, \dots, i_{|\bo(S,T)|}\}$.

\paragraph{Odd Cycle Example.}
Consider an odd cycle with $n$ vertices where $n$ is odd. Below we show an example to illustrate the above bound on partial recovery capacity.

Consider the following subset of constraints: 
\begin{align*}
2 & \geq z_{\{1,3\}} - z_{\emptyset}\\
2 & \geq z_{\{2,4\}} - z_{\emptyset}\\
1 & \geq z_{\{i\}} \quad \forall i \in \{ 5, 6,\dots, n \}\\
R_q(\delta) & \geq z_{\{1,2,3\}} - z_{\{1,3\}}\\
R_q(\delta) & \geq z_{\{2,3,4\}} - z_{\{2,4\}}\\
z_{\{1,2,3\}} + z_{\{2,3,4\}} & \geq z_{\{2,3\}} + z_{\{1,2,3,4\}}\\
z_{\{2,3\}} + z_{\{5\}} + z_{\{7\}} + \dots + z_{\{n\}} & \geq z_{\{2,3,5,7,\ldots, n\}} + \frac{n-3}{2} z_{\emptyset} & \textrm{(a)}\\
z_{\{1,2,3,4\}} + z_{\{6\}} + z_{\{8\}} + \dots z_{\{n-1\}} & \geq z_{\{1,2,3,4,6,8,\ldots,n-1\}} + \frac{n-5}{2} z_{\emptyset} & \textrm{(b)}\\
(n-\frac{n+1}2) -(1-R_q(\delta))\frac{n-1}2 & \geq z_V - z_{\{2,3,5,7,\ldots,n\}}\\
(n-\frac{n+3}2) -(1-R_q(\delta))\frac{n-3}2 & \geq z_V - z_{\{1,2,3,4,6,8,\ldots,n-1\}}
\end{align*}
Equations (a) and (b) above are repeated applications of the inequality: $z_{S}+z_{T} \geq z_{S \cup T} + z_{\emptyset}$ if $S \cap T = \emptyset$. By summing up those constraints we get 
$$
n+ 2R_q(\delta) + R_q(\delta) (n-2) \geq 2z_V - 2z_{\emptyset}
$$ 
and thus 
$$
\scap_q(G,\delta) \leq z_V \leq \frac{n}{2}(1+R_q(\delta)),
$$
whenever $G$ is an odd cycle.

\subsection{Achievability bound}
A naive achievability bound on  $\scap_q(G,\delta)$ is given by,
$$
\scap_q(G,\delta) \ge n(1 - h_q(\delta)), \quad \delta \le 1/2, 
$$
where $h_q(x) \equiv x\log_q(q-1) - x\log_q x -(1-x) \log_q(1-x)$.
This amount of storage can be achieved by just using an error-correcting code of length $m$, distance $\delta m +1$, and rate $1-h_q(\delta)$
in each of the vertices. Such codes exist, by the Gilbert-Varshamov bound.
Also, 
$$
\scap_q(G,\delta) \ge 0, 
$$
for $0 \le \delta \le 1-1/q$.

This simple bound can be improved by more carefully designing a code. Our main result of this section is the following.
%Overall our main achievability result can be summarized below.
\begin{theorem}
Given a graph $G$, let  $\calC$ be the set of all cliques of $G$. The generalized clique packing  number $\fcc_\delta(G)$ is defined to be 
 the optimum  of the following linear program.  For $0 \le x_C\le 1, \forall C \in \calC$, 
\begin{align*}
 \max \sum_{C \in \calC} x_C(|C| -h_q(\delta)), \tag{$\fcc_\delta(G)$}
\end{align*}
such that,
$$
\sum_{C\in \calC: u \in C} x_C \le 1.
$$
%Suppose, $\chi_f(G)$ denotes the fractional chromatic number of a graph $G$, and $\bar{G}$ denotes the complement graph of $G$.
Then,
$$
\scap_q(G,\delta) \ge \fcc_\delta(G), \quad \delta \le 1-1/q,
$$
and,
$$
\scap_q(G,\delta) \ge \fcc(G), \quad \delta > 1-1/q.
$$
\end{theorem}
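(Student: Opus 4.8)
The plan is to generalize the fractional clique-packing bound of Lemma~\ref{lem:scaplb}: that construction stores $|C|-1$ units of information on a clique $C$ by letting $|C|-1$ vertices hold independent uniform vectors and the remaining vertex hold their sum, so that every $X_i$ is a \emph{deterministic} function of $X_{N(i)}$. The point now is to squeeze an extra $1-h_q(\delta)$ unit per clique out of a near-optimal erasure-correcting code layered on top. This already handles the easy regime $\delta>1-1/q$: there the Plotkin bound gives $R_q(\delta)=0$, so no coding layer can add anything, and since each $X_i$ in the Lemma~\ref{lem:scaplb} code is a deterministic function of $X_{N(i)}$ it can be reconstructed from $X_{N(i)}$ alone whatever coordinates of $X_i$ are lost; hence $\scap_q(G,\delta)\ge\fcc(G)$ for all $\delta\le 1$.

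Now take $\delta\le1-1/q$. Fix a rational optimum $\{x_C\}$ of $\fcc_\delta(G)$ (raising singleton weights so that $\sum_{C\ni u}x_C=1$ at every $u$), pick $m$ making all $x_Cm$ integral, and give each vertex one coordinate block of size $x_Cm$ per clique $C\ni u$, aligned across the vertices of $C$. On clique $C$ keep the Lemma~\ref{lem:scaplb} structure but let the designated vertex store the sum of the other blocks \emph{plus} an independent codeword of a linear Gilbert--Varshamov code of distance $\delta m+1$ in its block. Recovery inside $C$ still holds: given the remaining clique-$C$ blocks (they sit at the neighbours of the vertex being repaired) the unknown GV codeword is pinned down by its values on the $\ge(1-\delta)m$ surviving coordinates of the block. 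The total entropy this yields is $\sum_C x_Cm\,(|C|-h_q(\delta/x_C))$, which already equals $m\,\fcc_\delta(G)$ whenever the clique-packing optimum is integral (e.g.\ a disjoint union of cliques, where one can also phrase the whole thing as $X_v=Y_v+F_v$ with $Y$ ranging over a linear $\fcc(G)$-optimal code $\mathcal Y$ and $(F_v)$ independent codewords of a GV code $\mathcal D$, and compute $\dim(\mathcal Y\cap\mathcal D^{n})$ exactly).

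The hard part will be a graph whose $\fcc_\delta$-optimum is genuinely fractional, such as an odd cycle: then $x_C<1$, and an adversary can pile all $\delta m$ erasures of $X_v$ into one clique block of size $x_Cm$, which forces the code there to have relative distance $\delta/x_C>\delta$, rate only $1-h_q(\delta/x_C)<1-h_q(\delta)$, so the per-clique bookkeeping above falls short. Equivalently, the global version $X_v=Y_v+F_v$ needs the base recovery $Y_v=\sum_{u\in N(v)}\lambda_{vu}Y_u$ to use \emph{scalar} $\lambda_{vu}\in\ff_q$ so that the residual $\sum_u\lambda_{vu}F_u-F_v$ stays in the linear code $\mathcal D$, and that scalar form is lost once cliques overlap. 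The way I would try to close the gap is to (i) not localize the erasure protection inside one clique block but pool, at each vertex, the protected coordinates across all of its cliques so the protecting code has length proportional to $m$ rather than $x_Cm$, and (ii) use the freedom to take $m\to\infty$ and to work over an extension field $\ff_{q^\ell}$ so that a generic GV layer meets the base code minimally; the accounting then closes using the LP inequality $\fcc_\delta(G)\le\fcc(G)+n\,(1-h_q(\delta))$, which holds because $\sum_C x_C\le\sum_C x_C|C|=\sum_u\sum_{C\ni u}x_C\le n$. The special cases $\scap_q(G,\delta)\ge n(1-h_q(\delta))$ (pure GV) and $\scap_q(G,\delta)\ge\fcc(G)$ (pure clique packing) drop out immediately, and I expect fitting the erasure-correcting layer to the local structure of the base code---while keeping the entropy pinned at $\fcc_\delta(G)$---to be where the real work is.
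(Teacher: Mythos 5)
Your proposal reproduces the paper's own strategy step for step: dispatch $\delta > 1-1/q$ via the Plotkin bound and the full-failure case, handle the integral clique packing by layering a per-clique erasure code on the construction of Lemma~\ref{lem:scaplb}, and then relax to fractional $\{x_C\}$. The obstruction you raise in the fractional case is genuine. When a vertex $u$ lies in overlapping cliques with $x_C<1$, an adversary can place all $\delta m$ of $u$'s erasures inside a single $C$-block of length $m x_C$, so the clique-$C$ code must survive a $\delta/x_C > \delta$ relative erasure fraction in one block; the random-coding count then requires redundancy about $m x_C\, h_q(\delta/x_C)$ rather than the $m x_C\, h_q(\delta)$ the bookkeeping allows, and for large $q$ (where $h_q(\delta)\to\delta$) even the Singleton requirement $m x_C\, h_q(\delta) \ge \delta m$ fails whenever $x_C < \delta/h_q(\delta)$, e.g.\ for the $x_C=1/2$ fractional matching on an odd cycle. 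You should be aware that this is precisely where the paper's own proof is silent: it constructs, for each clique, a length-$|C| m x_C$ code that corrects a $\delta$-fraction of erasures \emph{within one block of size $m x_C$}, and then applies it unchanged to a scheme in which up to $\delta m$ erasures --- a $\delta/x_C$ fraction of that block --- can all land there.

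So your diagnosis is accurate and careful, but the proof is not complete. The fixes you sketch in (i) and (ii) --- pooling the erasure protection across all of a vertex's cliques so the protecting code has length $\Theta(m)$, passing to $\ff_{q^\ell}$ and $m\to\infty$, and appealing to a ``generic'' intersection of the GV layer with the base code --- are plausible directions but are not an argument: the inequality $\fcc_\delta(G) \le \fcc(G) + n(1-h_q(\delta))$ is only a dimension-count consistency check, not a coding scheme, and ``a generic GV layer meets the base code minimally'' is exactly the statement that needs proof. A complete argument would need to exhibit (or prove the existence of) a code that attains $\fcc_\delta(G)$ while tolerating adversarially concentrated erasures at every vertex; as written neither your proposal nor the published proof supplies this, so the fractional case should be regarded as open in your write-up.
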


\begin{proof}
First of all, notice that, $\scap_q(G,\delta) \ge \scap_q(G,1) =  \scap^{(q)}(G) \ge \fcc(G)$ where the last inequality follow from Lemma \ref{lem:scaplb}. Below therefore we only concentrate on the case when $\delta \le 1-\frac1q.$
We illustrate the proof of this theorem by constructing a sequence of error-correcting codes that serves our purpose.

First, we show that for any positive integer $d$ and a large enough positive integer $m$, there exists a linear error-correcting code of length $dm$ and dimension $dm - m h_q(\delta)$  that can correct any $\delta m$ erasures between coordinates $im+1$ and $(i+1)m$ for any $i\in \{0,1, \dots, d-1\}$.

Randomly and uniformly choose a $q$-ary parity check matrix of size $(dm - k) \times dm$ (that is, each coordinate of the 
matrix is chosen from $\{0,1,\dots, q-1\}$ with uniform probability). The probability that a vector of weight 
$\delta m$ is a codeword is $q^{-(dm-k)}$.
Now the probability that there exists such a codeword that is an uncorrectable erasure pattern of the above type is
$$
\le d \binom{m}{\delta m}(q-1)^{\delta m} q^{-(dm -k)} \le  \frac{d}{\sqrt{m}}q^{-(dm-k - m h_q(\delta))}, \delta \le 1-\frac1q.
$$  
Hence there exists such a code with dimension  $dm - m h_q(\delta)$ for any $\delta \le 1-\frac1q$ for large enough $m$.

Suppose, $C \in \calC$ be a clique of size $d = |C|$ in $G$. Use the linear error-correcting code of length $dm$ constructed above and store each block of $m$ coordinates in one of the vertices of the clique. If up to $\delta$ proportion of the content of any vertex is erased, it can be  recovered by accessing the other vertices. The total information stored in this clique is $dm - m h_q(\delta) = m (|C|- h_q(\delta))$.

Now let us find a partition of the graph into a collection of cliques $\{C_1, C_2, \dots, C_t\}$, such that each vertex belongs to at most one clique from the collection. For each clique $C_i, i =1, \dots, t,$, use an error-correcting code of length $|C_i|\cdot m$ to store $m (|C_i|- h_q(\delta))$ $q$-ary information. In this way the maximum amount of information that can be stored in the graph with the partial recovery condition is, $$
 \max \sum_{C \in \calC} x_C(|C| -h_q(\delta)),
$$
where $x_C \in \{0,1\}$ denotes whether the clique $C$ is in the collection, under the constraint that each vertex is in only one clique, i.e., 
$
\sum_{C\in \calC: u \in C} x_C \le 1.$ This is an integer linear program and 
$\scap_q(G,\delta)$ is at least the optimum value of this integer linear program for $\delta \le 1-\frac1q.$

Now, following an argument similar to  Lemma \ref{lem:scaplb}, we show that  the  integer linear program can be relaxed to a linear program and an achievability scheme still exists. For this, assume, $\{x_C\}_{C\in \calC}$ achieve $\fcc_\delta(G)$. Let $m$ be  such that ${x_C}\cdot m $ is integral and large enough for every $C$. For each clique $C=\{u_1,\ldots, u_{|C|}\}$ in the graph, in each vertex $u \in C$, store $m x_C$ $q$-ary symbols, such that the $|C|\cdot m x_C$-length vector in the clique $C$ is a codeword of an error correcting code of length $|C|\cdot m x_C$ that can carry  $m x_C (|C|- h_q(\delta))$ symbols of information, as noted above. Once we do this for all cliques, the number of $q$-ary symbols stored in vertex $u$ is $\sum_{C\in \calC: u \in C} x_C m \le m.$ The total amount of information stored in the graph is $\sum_{C \in \calC} m x_C (|C|- h_q(\delta)) = m \fcc_\delta(G)$.

This proves the claim.
\end{proof}

\vspace{0.2in}

\paragraph{Odd Cycle Example.}
Let us consider the example of $n$-cycle again where $n$ is an odd number.
Since the size of a fractional matching is $\frac{n}2$, we have
$$
\scap_q(G,\delta)  \ge \frac{n}{2}(2-h_q(\delta)), \quad \delta \le 1-\frac1q,
$$
and $\scap_q(G,\delta)  \ge \frac{n}{2}$ when $ \delta > 1-\frac1q$. Compare this with the impossibility bound that we have,
$$
\scap_q(G,\delta)  \le \frac{n}{2}(1+R_q(\delta)).
$$
It is widely conjectured that the optimal rate of an error-correcting code is given by
$$
R_q(\delta) = 1- h_q(\delta),
$$
for small $q$, which is also known as the Gilbert-Varshamov conjecture. If this conjecture is true, then our upper and lower bounds match exactly.
In particular, for large $q$ (i.e., $q \to \infty$), we have $h_q(\delta) \to \delta$ and $R_q(\delta) \to 1-\delta.$ Hence, our bounds match definitively 
 in the regime of large $q$.

\section{Conclusions}
\label{sec:conclusion}

Storage capacity is a natural problem of network coding and intimately related to the index coding problem which encapsulates the computational challenges of general network coding. In this paper we have viewed storage capacity as a natural information theoretic analog of vertex cover of graph. For some family of graphs, vertex cover is easier to approximate, such as the planar graphs; we see that, the storage capacity is also easier to estimate for these families. The relation to index coding also leads to approximation guarantees for index coding rate. We further illustrated an approach to bound the storage capacity of graphs in terms of a small number of vertex covers, which leads us to exactly quantify the storage capacity of cycles with chords.

In the last part of this paper, we provided one possible generalization of the storage capacity to partial recovery capacity. It is important to note that there are several other possible generalizations to this quantity that may be useful in practice. For example, one might consider recovery from failure of multiple vertices together from their combined neighborhood (analogous to the cooperative repair problem in distributed storage \cite{rawat2015cooperative}). In yet another scenario of  recovery from multiple  failures, a vertex failure may be recoverable from its neighborhood as long as at most $t\ge 0$ of its neighbors have also failed. We leave these generalizations as interesting future studies of storage capacity.

{ \small
\bibliographystyle{abbrv} \bibliography{storage}

\begin{thebibliography}{10}

\bibitem{alon2008}
N.~Alon, E.~Lubetzky, U.~Stav, A.~Weinstein, and A.~Hassidim.
\newblock Broadcasting with side information.
\newblock In {\em Proceedings of the 2008 49th Annual IEEE Symposium on
  Foundations of Computer Science}, FOCS '08, pages 823--832, Washington, DC,
  USA, 2008. IEEE Computer Society.

\bibitem{alon1996source}
N.~Alon and A.~Orlitsky.
\newblock Source coding and graph entropies.
\newblock {\em IEEE Transactions on Information Theory}, 42(5):1329--1339,
  1996.

\bibitem{appel1977}
K.~Appel and W.~Haken.
\newblock Every planar map is four colorable. part i: Discharging.
\newblock {\em Illinois J. Math.}, 21(3):429--490, 09 1977.

\bibitem{appel1977b}
K.~Appel, W.~Haken, and J.~Koch.
\newblock Every planar map is four colorable. part ii: Reducibility.
\newblock {\em Illinois J. Math.}, 21(3):491--567, 09 1977.

\bibitem{Arbabjolfaei016}
F.~Arbabjolfaei and Y.~Kim.
\newblock Approximate capacity of index coding for some classes of graphs.
\newblock In {\em {IEEE} International Symposium on Information Theory, {ISIT}
  2016, Barcelona, Spain, July 10-15, 2016}, pages 2154--2158, 2016.

\bibitem{Baker94}
B.~S. Baker.
\newblock Approximation algorithms for np-complete problems on planar graphs.
\newblock {\em J. {ACM}}, 41(1):153--180, 1994.

\bibitem{Bar-YehudaE82}
R.~Bar{-}Yehuda and S.~Even.
\newblock On approximating a vertex cover for planar graphs.
\newblock In {\em Proceedings of the 14th Annual {ACM} Symposium on Theory of
  Computing, May 5-7, 1982, San Francisco, California, {USA}}, pages 303--309,
  1982.

\bibitem{bar2011index}
Z.~Bar-Yossef, Y.~Birk, T.~Jayram, and T.~Kol.
\newblock Index coding with side information.
\newblock {\em IEEE Transactions on Information Theory}, 57(3):1479--1494,
  2011.

\bibitem{berliner2011index}
Y.~Berliner and M.~Langberg.
\newblock Index coding with outerplanar side information.
\newblock In {\em Information Theory Proceedings (ISIT), 2011 IEEE
  International Symposium on}, pages 806--810. IEEE, 2011.

\bibitem{bhk2009}
A.~Björklund, T.~Husfeldt, and M.~Koivisto.
\newblock Set partitioning via inclusion-exclusion.
\newblock {\em SIAM Journal on Computing}, 39(2):546--563, 2009.

\bibitem{BlasiakKL11}
A.~Blasiak, R.~Kleinberg, and E.~Lubetzky.
\newblock Lexicographic products and the power of non-linear network coding.
\newblock In {\em {IEEE} 52nd Annual Symposium on Foundations of Computer
  Science, {FOCS} 2011, Palm Springs, CA, USA, October 22-25, 2011}, pages
  609--618, 2011.

\bibitem{BlasiakKL10}
A.~Blasiak, R.~D. Kleinberg, and E.~Lubetzky.
\newblock Index coding via linear programming.
\newblock {\em CoRR}, abs/1004.1379, 2010.

\bibitem{bowden2011planarity}
R.~Bowden, H.~X. Nguyen, N.~Falkner, S.~Knight, and M.~Roughan.
\newblock Planarity of data networks.
\newblock In {\em Teletraffic Congress (ITC), 2011 23rd International}, pages
  254--261. IEEE, 2011.

\bibitem{cadambe2015bounds}
V.~R. Cadambe and A.~Mazumdar.
\newblock Bounds on the size of locally recoverable codes.
\newblock {\em IEEE transactions on information theory}, 61(11):5787--5794,
  2015.

\bibitem{effros2015equivalence}
M.~Effros, S.~El~Rouayheb, and M.~Langberg.
\newblock An equivalence between network coding and index coding.
\newblock {\em IEEE Transactions on Information Theory}, 61(5):2478--2487,
  2015.

\bibitem{gadouleau2015fixed}
M.~Gadouleau, A.~Richard, and S.~Riis.
\newblock Fixed points of boolean networks, guessing graphs, and coding theory.
\newblock {\em SIAM Journal on Discrete Mathematics}, 29(4):2312--2335, 2015.

\bibitem{gopalan2012locality}
P.~Gopalan, C.~Huang, H.~Simitci, and S.~Yekhanin.
\newblock On the locality of codeword symbols.
\newblock {\em IEEE Transactions on Information Theory}, 58(11):6925--6934,
  2012.

\bibitem{grotzsch59}
H.~Gr\"otzsch.
\newblock Zur theorie der diskreten gebilde, vii: Ein dreifarbensatz f�r
  dreikreisfreie netze auf der kugel.
\newblock {\em Wiss. Z. Martin-Luther-U., Halle-Wittenberg, Math.-Nat. Reihe},
  8:109�120, 1959.

\bibitem{hochbaum1982approximation}
D.~S. Hochbaum.
\newblock Approximation algorithms for the set covering and vertex cover
  problems.
\newblock {\em SIAM Journal on computing}, 11(3):555--556, 1982.

\bibitem{khot2008vertex}
S.~Khot and O.~Regev.
\newblock Vertex cover might be hard to approximate to within 2- $\varepsilon$.
\newblock {\em Journal of Computer and System Sciences}, 74(3):335--349, 2008.

\bibitem{langberg2011hardness}
M.~Langberg and A.~Sprintson.
\newblock On the hardness of approximating the network coding capacity.
\newblock {\em IEEE Transactions on Information Theory}, 57(2):1008--1014,
  2011.

\bibitem{LiptonT80}
R.~J. Lipton and R.~E. Tarjan.
\newblock Applications of a planar separator theorem.
\newblock {\em {SIAM} J. Comput.}, 9(3):615--627, 1980.

\bibitem{mazumdar2015storage}
A.~Mazumdar.
\newblock Storage capacity of repairable networks.
\newblock {\em IEEE Transactions on Information Theory}, 61(11):5810--5821,
  2015.

\bibitem{papailiopoulos2014locally}
D.~S. Papailiopoulos and A.~G. Dimakis.
\newblock Locally repairable codes.
\newblock {\em IEEE Transactions on Information Theory}, 60(10):5843--5855,
  2014.

\bibitem{rawat2015cooperative}
A.~S. Rawat, A.~Mazumdar, and S.~Vishwanath.
\newblock Cooperative local repair in distributed storage.
\newblock {\em EURASIP Journal on Advances in Signal Processing}, 2015(1):107,
  2015.

\bibitem{roth2006introduction}
R.~M. Roth.
\newblock {\em Introduction to Coding Theory}.
\newblock Cambridge University Press, 2006.

\bibitem{ShanmugamD14}
K.~Shanmugam and A.~G. Dimakis.
\newblock Bounding multiple unicasts through index coding and locally
  repairable codes.
\newblock In {\em 2014 {IEEE} International Symposium on Information Theory,
  Honolulu, HI, USA, June 29 - July 4, 2014}, pages 296--300, 2014.

\bibitem{shannon1956zero}
C.~Shannon.
\newblock The zero error capacity of a noisy channel.
\newblock {\em IRE Transactions on Information Theory}, 2(3):8--19, 1956.

\bibitem{xiao2013}
M.~Xiao and H.~Nagamochi.
\newblock Exact algorithms for maximum independent set.
\newblock In L.~Cai, S.-W. Cheng, and T.-W. Lam, editors, {\em Algorithms and
  Computation}, pages 328--338, Berlin, Heidelberg, 2013. Springer Berlin
  Heidelberg.

\end{thebibliography}
}

\end{document}